\newtheorem{theorem}{Theorem}[section]
\newtheorem{lemma}[theorem]{Lemma}
\newtheorem{proposition}[theorem]{Proposition}
\newtheorem{corollary}[theorem]{Corollary}
\author{Gwenaël Richomme\footnote{Université Paul-Valéry Montpellier 3, Université de Montpellier, CNRS, Montpellier, France},
Matthieu Rosenfeld\footnote{Université de Montpellier, CNRS, Montpellier, France}}
\title{Reconstructing words using queries on subwords or factors}
\newcommand{\oq}{``}
\newcommand{\cq}{''}
\begin{document}

\maketitle

\begin{abstract}
We study word reconstruction problems.
Improving a previous result by P.~Fleischmann, M.~Lejeune, F.~Manea, D.~Nowotka and M.~Rigo,
we prove that,
for any unknown word $w$
of length $n$ over an alphabet of cardinality $k$,
$w$ can be reconstructed
from the number of occurrences as subwords (or scattered factors)
of $O(k^2\sqrt{n\log_2(n)})$ words.
Two previous upper bounds obtained by S.~S.~Skiena and G.~Sundaram
are also slightly improved:
one when considering information on the existence of subwords instead of on the numbers of their occurrences,
and,  the other when considering information on the existence of factors.
\end{abstract}
\pagebreak

\section{Introduction}
A natural combinatorial question is to ask how much partial information on an object is needed to reconstruct this object (see below and in our references for examples).
For example,
in \cite{Fleischmann_et_al2020DLT,Fleischmann_et_al2021IJFCS},
P.~Fleischmann, M.~Lejeune, F.~Manea, D.~Nowotka and M.~Rigo
consider the problem of reconstructing a word $w$ from information on the number of occurrences as subwords of $w$ of some words.
Let us recall that a word $u$ is a \textit{subword} of a word $w$ (or a \textit{scattered subword} of $w$) if
$u$ and $w$ can be decomposed in the form $u = u_1 \cdots u_\ell$ and $w = v_0 u_1 v_1 \cdots u_\ell v_\ell$
for some words $u_1, \ldots, u_\ell, v_0, \ldots, v_\ell$.
Such a double decomposition marks an occurrence of $u$ as a subword of $w$.
The number of occurrences of $u$ as a subword of $w$ is sometimes denoted as the binomial coefficient
$\binom{w}{u}$ since this number coincides with the traditional coefficient
$\binom{|w|}{|u|}$ when the words $u$ and $w$ are written on a single letter
(here, as usual in combinatorics on words, $|w|$ denotes the length of $w$), see for instance \cite[chap. 6]{Lothaire1983book}.
The problem addressed by Fleischmann \textit{et al.} is presented as a game in which the player has to guess an unknown word.
In his task the player asks questions in a certain form until he has enough information to uniquely determine the word.
More precisely, at each round, the player chooses a word $u$ based on the previous answers that he obtained and asks for the value of $\binom{w}{u}$. The goal of the player is to minimize the number of questions.
Fleischmann \textit{et al.} proved that there is a strategy to ensure that at most $\min(|w|_a, |w|_b)+1\le \lfloor\frac{|w|}{2}\rfloor+1$ questions are needed when $w$ is defined on the binary alphabet $\{a, b\}$ (for a letter $\alpha$, $|w|_\alpha = \binom{w}{\alpha}$ denotes the number of occurrences of $\alpha$ in $w$).
For any word $w$ over the alphabet $\{1, \ldots, k\}$
they proved that the number of questions needed is bounded  by $\sum_{i \in \{1, \ldots, k\}} |w|_i (k+1-i)$.
Our main results (Theorem~\ref{T_binarycountsubword} and Corollary~\ref{C_sharp_arbitrary_alphabet}) prove that this number of questions is at most  $\binom{k}{2}\left(7\left\lceil \sqrt{|w| \log_2 (|w|)}\right\rceil+4\right)$. For any fixed $k$, our upper bound is asymptotically much stronger as the length of the word goes to infinity. For binary words in particular, their upper bound is $\frac{|w|}{2}+1$ and ours is $7\left\lceil \sqrt{|w| \log_2 (|w|)}\right\rceil+4\,.$ We also adapt this strategy (Theorem \ref{expectrunningtime}) to provide an algorithm whose expected running time over a uniform random binary word of length $n$ is $\mathcal{O}(\log_2 n)$.

Let us recall that the previous game is related to another problem that seems to have been first introduced by L.~O.~Kalashnik \cite{Kalashnik1973}:
What is the smallest $\ell$ such that we can reconstruct $w$ from the values $\binom{w}{u}$ for all words $u$ of length $\ell$?
As far as we know, the best upper bound, $\lfloor \frac{16}{7}\sqrt{|w|}\rfloor +5$, for this problem was obtained by I.~Krasikov and Y.~Roditty in 1997 \cite{KrasikovRoditty1997JCTA} using a link with the Prouhet-Tarry-Escott problem about Diophantine analysis.
Also the best known lower bound, $3^{(\sqrt{2/3}-o(1))\log^{1/2}_3(|w|))}$, is due to \cite{Dudik_Schulman2003JCTA}.
Our result does not improve this upper bound since, in the binary case,
at least one query concerns a word $u$ of length at least $\min(|w|_0, |w|_1)$ which is around $|w|/2$ for many words $w$.

In a variant of the previous problem queries in the form {\oq}what is the value of $\binom{w}{u}$?{\cq}
is replaced with queries in the form {\oq}Is $\binom{w}{u} \geq 1$?{\cq}
or equivalently {\oq}Is $u$ a subword of $w$?{\cq}.
More precisely the problem is to determine the least value $\ell$
such that the set of subwords of length $\ell$ determines uniquely a word $w$.
This problem arose in various areas.
In \cite[Chap 6]{Lothaire1983book}, it is proved
that any word $w$ of length $n$ over an alphabet ${\cal A}$
is uniquely determined by its set of
subwords in the form $a^*b^*$ of length at most $\lceil |w|_a + |w|_b+1/2 \rceil$ with
$a$ and $b$ distinct letters of ${\cal A}$.
The problem is also studied in \cite{Levenshtein2001JCTA}.

In \cite{Skiena_Sundaram1993LNCS,Skiena_Sundaram1995JCB},
in the context of DNA sequencing of hybridization,
S.~S.~Skiena and G.~Sundaram consider the problem
of minimizing the number of queries in the form {\oq}Is $u$ a subword of $w$?{\cq}.
They prove that a word $w$ of length $n$ over an alphabet ${\cal A}$
of cardinality $k$ can be reconstructed
using $O(n \log_2(k) +k \log_2(n))$ such queries.
More precisely Theorem~15 in \cite{Skiena_Sundaram1995JCB}
states that $ 1.59 n \log_2(k) + 2k \log_2(n) + 5k$ queries are sufficient to reconstruct $w$.
Using a basic information theory approach
S.~S.~Skiena and G.~Sundaram also provide the lower bound $n \log_2 k$ for the number of queries.
In Section~\ref{Sec:exist_queries},
we slightly improve S.~S.~Skiena and G.~Sundaram's strategy
and we provide a new upper bound, reducing the gap with the lower bound.
More precisely,
we state that at most $n \log_2(k) + k (2 + \lfloor \log_2(n+1)\rfloor)$
queries are sufficient to reconstruct $w$, reducing the gap between the bounds from $0.59n \log_2(k) +O(k \log_2(n))$ down to $O(k \log_2(n))$.

In Section~\ref{Sec:exists-factor_bis}, we consider factors instead of subwords (a word $u$ is a \textit{factor} of a word $w$ if there exist words $p$ and $s$ such that $w = pus$) and  the corresponding problem of minimizing the number of queries in the form {\oq}Is $u$ a factor of $w$?{\cq} needed to reconstruct an unknown word $w$.
In \cite{Skiena_Sundaram1993LNCS,Skiena_Sundaram1995JCB},
S.~S.~Skiena and G.~Sundaram prove that,  for an unknown word $w$ over an alphabet ${\cal A}$
of cardinality $k$,
if the length $n$ of $w$  is known then $w$ can be reconstructed using a number of queries which is in
$(k-1)n+ 2\log_2(n)+O(k)$.  
Actually their proof leads to the upper bound $(k-1)n+\log_2(n)+O(k)$, which is $n+\log_2(n)+O(1)$ in the binary case. This more accurate upper bound was already mentioned in the binary case in \cite{Skiena_Sundaram1995JCB}.
A simple double counting argument (there are $k^n$ words of length $n$ and each question has two possible outcomes) leads to the lower bound $n\log_2 k$.
We improve their strategy and reduce the upper bound to $(k-1)(n+2) + \left\lceil \frac{\log_2(n)}{2}\right\rceil+3$. In the binary case, this reduces the gap between the lower and the upper bound from $\log_2(n)+O(1)$ down to $\left\lceil\frac{\log_2(n)}{2}\right\rceil+5$.

Queries in the form {\oq}What is the number of occurrences of a word $u$ as a factor of $w${\cq} have also been considered by S.S. Skiena et G. Subraman \cite{Skiena_Sundaram1995JCB}. Their lower bound $nk/4- o(n)$ on the number of queries needed is, up to our knowledge, the best known. One can deduce whether a word $u$ occurs as a factor in a word $w$ from the number of occurrences of $u$ in $w$. This observation allows them to obtain the same upper bounds for this fourth problem than for the previous problem. Similarly, our bound applies. Hence, we also slightly improve the upper bound in this case, but this improvement is negligible compared to the size of the gap between the lower bound and the upper bound.

Basic definitions and notations have already been recalled (following \cite{Lothaire1983book}).
Let us observe that $\#S$ denotes the cardinality of a set $S$.
Moreover, given a word $w$ over an alphabet ${\cal A}$,
we will simply use $n$ to denote the length $|w|$ of $w$ and
$k$ to denote the cardinality $\#{\cal A}$ of ${\cal A}$.


\section{How-many-subwords queries\label{Sec:How many}}
In this section, we focus on queries in the form
{\oq}How many occurrences of $u$ as a subword does $w$ contains?{\cq}
or equivalently
{\oq}What is the value of $\binom{w}{u}$?{\cq}.
We call such a query a $\#$-subword query.
Our main result regarding this kind of query is the following. Of course, as it will be the case for other queries in the next sections, we assume that such a query can be answered without knowing $w$.
\begin{theorem}\label{T_binarycountsubword}
The number of $\#$-subword queries needed to reconstruct a word of length $n$ over $\{0,1\}$ is at most $7\left\lceil \sqrt{n \log n}\right\rceil+4$ whether $n$ is known or not.
\end{theorem}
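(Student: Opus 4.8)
The plan is to reconstruct a binary word $w$ of length $n$ using roughly $\sqrt{n\log n}$ queries by exploiting the arithmetic structure of binomial coefficients $\binom{w}{u}$ for words $u$ of the form $0^a1^b$ or similar low-complexity patterns. Let me think about how counting occurrences of such simple subwords pins down $w$.
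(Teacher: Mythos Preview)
What you have written is not a proof but a one-sentence statement of intent. You correctly guess that the relevant queries involve words of the shape $0^a1^b$ (and variants such as $1^i0^a1^j$), but none of the actual work is present, and the approach as stated does not obviously yield the $\sqrt{n\log n}$ bound.

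To see the gap concretely: write $w=0^{s_0}10^{s_1}1\cdots10^{s_m}$ with $m=|w|_1$. A query $\binom{w}{01^\ell}$ returns a single linear combination $\sum_j s_j\binom{m-j}{\ell}$ of the $m+1$ unknowns $s_0,\dots,s_m$. Naively, recovering all $s_j$ from such equations requires $\Theta(m)=\Theta(n)$ queries, so ``exploiting the arithmetic structure'' by itself gives nothing better than the trivial linear bound. The paper's proof hinges on two nontrivial ingredients you have not supplied:
\begin{itemize}
\item An arithmetic lemma (Lemma~\ref{coeffromeqanoying}/\ref{coeffromeq}) showing that a \emph{single} query $\binom{w}{01^{\ell}}$ determines $r$ consecutive block lengths $s_{j},\dots,s_{j+r-1}$ at once, \emph{provided} each of them is smaller than roughly $\ell/r$. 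This is a digit-extraction argument in a mixed-radix system of binomial coefficients, and it fails as soon as one of the $s_j$ is large.
\item A separate recursive search (Lemma~\ref{locatelargblocks}) that locates all ``large'' blocks (those of length at least $m=\lceil\sqrt{n\log n}\rceil$) and determines their lengths in about $\frac{n\log n}{m}$ queries, so that the arithmetic lemma can then be applied to the remaining small blocks in batches of size $r\approx \frac{|w|_1}{2m}$.
\end{itemize}
The choice $m\approx\sqrt{n\log n}$ balances the cost of these two phases. Without the large/small dichotomy and the batch-recovery lemma, there is no mechanism in your proposal that beats linear.
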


A word $w$ that contains $m$ occurrences of $1$, can always be written as
$w=0^{s_0}10^{s_1}1\ldots 10^{s_{m}}$ where the $s_i$ are nonnegative integers.
Since $m = \binom{w}{1}$, it only requires one query to find $m$.
Our goal is to find the values of all the $s_i$.
Our strategy relies on the fact that if we know which of the $s_i$ are {\oq}large{\cq} and if we know their values then we can determine multiple others $s_i$ with a single query (this is shown in Lemma \ref{coeffromeq}). On the other hand since we cannot have too many {\oq}large{\cq} $s_i$ we have an efficient strategy to find all these $s_i$ (see Lemma \ref{locatelargblocks}).
Using these two facts together and optimizing the meaning of {\oq}large{\cq} we get the desired result.

Actually, in a uniform random word we do not expect to have any $s_i$ larger than $\mathcal{O}(\log n)$ and this leads to a more efficient average case algorithm.
\begin{theorem}\label{expectrunningtime}
There is a deterministic strategy that, given any integer $n$, reconstructs in average in  $\mathcal{O}(\log_2(n))$ queries any word $w$ taken uniformly at random among all binary words of length $n$.
\end{theorem}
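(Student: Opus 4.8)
The plan is to analyse a single deterministic adaptive strategy and to show that, although its worst case is the $O(\sqrt{n\log n})$ bound of Theorem~\ref{T_binarycountsubword}, its cost collapses to $O(\log_2 n)$ on the overwhelming majority of inputs. Since $n$ is given, the first query returns $m=\binom{w}{1}$, after which the total number of zeros $n-m$ is known and the task is to recover the block vector $(s_0,\dots,s_m)$ of $w=0^{s_0}1\cdots 10^{s_m}$. I would fix the threshold $B=\lceil 3\log_2 n\rceil$ and call a block \emph{large} when $s_i\ge B$. The strategy first runs the large-block locator of Lemma~\ref{locatelargblocks} with this threshold and then recovers the remaining (small) blocks with Lemma~\ref{coeffromeq}. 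Crucially, I would make the strategy self-checking: as soon as an answer is inconsistent with the standing hypothesis that every block is below $B$ (equivalently, as soon as a large block is detected), it aborts this cheap phase and restarts from scratch with the optimal worst-case strategy of Theorem~\ref{T_binarycountsubword}. This keeps correctness on every word while guaranteeing a total cost of at most $q_{\mathrm{good}}(n)+7\lceil\sqrt{n\log n}\rceil+4$, where $q_{\mathrm{good}}(n)$ denotes the cost of the cheap phase alone.

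Next I would control how often the expensive branch is entered. For a uniformly random binary word, $\max_i s_i\ge B$ forces $w$ to contain a run of at least $B$ consecutive zeros; there are at most $n$ possible starting positions for such a run, each occurring with probability $2^{-B}$, so
\[
\Pr[\max_i s_i\ge B]\le n\,2^{-B}\le n\cdot 2^{-3\log_2 n}=n^{-2}.
\]
This is precisely the quantitative content of the remark that a random word is not expected to contain any block larger than $\Theta(\log n)$. Assembling the expectation over the uniform word, the abort mechanism yields
\[
E\le q_{\mathrm{good}}(n)+\Pr[\max_i s_i\ge B]\cdot\bigl(7\lceil\sqrt{n\log n}\rceil+4\bigr)\le q_{\mathrm{good}}(n)+n^{-2}\cdot O(\sqrt{n\log n}),
\]
and the second term is $o(1)$. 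Everything therefore reduces to proving $q_{\mathrm{good}}(n)=O(\log_2 n)$.

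The step I expect to be the main obstacle is exactly this bound on the cheap phase: showing that when every block is known a priori to be below $B=\Theta(\log n)$, the entire vector $(s_0,\dots,s_m)$ is recovered in only $O(\log_2 n)$ queries, even though there are $\Theta(n)$ blocks. This is where the quantitative strength of Lemma~\ref{coeffromeq} must be used: under the bound $s_i<B$, a single query has to pin down a whole long run of consecutive small blocks at once (so that a constant-per-$\log n$ fraction of the blocks is fixed by each query), the returned binomial coefficient being decoded into its base-$B$ digits; simultaneously Lemma~\ref{locatelargblocks} must certify the absence of large blocks in $O(\log_2 n)$ queries. The delicate point to verify is that the run length handled by one application of Lemma~\ref{coeffromeq} is indeed long enough when $B=\Theta(\log n)$, i.e.\ that the base-$B$ decoding remains unambiguous for windows of the required size; this is the quantitative heart of the argument, and it is also what distinguishes the cheap regime (all blocks small) from the worst case, where large blocks shrink the decodable window and force the $\sqrt{n\log n}$ behaviour. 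Once this window bound is established, the two displays above give $E=O(\log_2 n)$, proving Theorem~\ref{expectrunningtime}.
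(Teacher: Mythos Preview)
Your overall architecture---run a cheap procedure under the hypothesis that no block exceeds $B=\Theta(\log n)$, then fall back to Theorem~\ref{T_binarycountsubword} with probability $O(1/n)$---is exactly the paper's. The genuine gap is in the cost of your cheap phase. You propose to run the locator of Lemma~\ref{locatelargblocks} with threshold $m=B=\Theta(\log n)$ and assert that it ``must certify the absence of large blocks in $O(\log_2 n)$ queries''. It does not: the bound of Lemma~\ref{locatelargblocks} is $\frac{2|w|_0\lceil\log_2(|w|_1+1)\rceil}{m}$, and with $|w|_0=\Theta(n)$ and $m=\Theta(\log n)$ this is $\Theta(n)$, irrespective of whether large blocks are actually present. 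The algorithm recurses into any interval whose blocks sum to at least $m$, and with such a small $m$ almost every interval qualifies. So $q_{\mathrm{good}}(n)=O(\log n)$ fails precisely at the locator step, not at the Lemma~\ref{coeffromeq} step you flagged as delicate.

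The paper sidesteps this by omitting the locator entirely in the cheap phase: it simply \emph{assumes} all blocks are below $m=\lceil 2\log_2 n\rceil$ and runs Phase~3 of the proof of Theorem~\ref{T_binarycountsubword} with this value of $m$. That phase costs $2(2m+1)=O(\log n)$ queries and produces a candidate $w'$; a single extra query $\binom{w}{w'}$ then tests whether $w'=w$. If not, the assumption was wrong and one falls back to the $O(\sqrt{n\log n})$ strategy; by Lemma~\ref{expectnolongbloc} this occurs with probability at most $1/n$, giving the expected $O(\log n)$. Your intuition about Lemma~\ref{coeffromeq} is correct---with $m=\Theta(\log n)$ each query determines $r=\Theta(|w|_1/\log n)$ blocks, so about $2m=O(\log n)$ queries suffice---but it must be used \emph{without} first certifying the small-block hypothesis via the locator; certification comes only at the end, by comparing the reconstructed candidate to $w$.
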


The next lemma allows to prove Lemma~\ref{coeffromeq}.

\begin{lemma}\label{coeffromeqanoying}
  Let $r$, $\ell$, $s_1,\ldots,s_r$ be non-negative integers such that $1 \leq r \leq \ell +1$ and
  for all $j \in \{1, \ldots, r\}$, $s_j < \frac{\ell+1}{r}$.
  The values of $s_1,\ldots,s_r$ are uniquely determined by the values of
  $\binom{0^{s_r}10^{s_{r-1}}1\cdots 0^{s_2}10^{s_1}1^\ell}{01^\ell}$, $r$ and $\ell$.
\end{lemma}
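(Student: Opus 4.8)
The plan is to reduce the statement to a purely numerical one by computing the binomial coefficient explicitly, and then to recognise the resulting weighted sum as a positional (mixed-radix) encoding of the digits $s_1,\dots,s_r$. Write $w = 0^{s_r}1\,0^{s_{r-1}}1\cdots 0^{s_2}1\,0^{s_1}1^\ell$ and set $c_j = \binom{\ell+j-1}{j-1}$. An occurrence of $01^\ell$ as a subword is obtained by selecting one $0$ of $w$ and then $\ell$ of the $1$'s lying to its right. A $0$ belonging to the block $0^{s_j}$ (the $j$-th block of zeros counted from the right) has exactly $(j-1)+\ell$ ones to its right, namely the $j-1$ separating $1$'s followed by the final $1^\ell$, and hence contributes $\binom{(j-1)+\ell}{\ell}=c_j$ occurrences. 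Summing over all zeros yields
\[
\binom{0^{s_r}1\cdots 0^{s_1}1^\ell}{01^\ell}=\sum_{j=1}^{r}s_j\,c_j =: N .
\]
Since $r$ and $\ell$ determine both the weights $c_j$ and the admissible range of each $s_j$, it suffices to show that $(s_1,\dots,s_r)$ is the unique admissible preimage of $N$ under this weighted sum.

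The heart of the argument is the inequality $\sum_{i=1}^{j-1} s_i c_i < c_j$ for every $j\in\{2,\dots,r\}$, which says precisely that the weights behave like the place values of a positional numeral system on the admissible domain. I would prove it as follows. By the hockey-stick identity, $\sum_{i=1}^{j-1} c_i = \sum_{m=0}^{j-2}\binom{\ell+m}{m}=\binom{\ell+j-1}{j-2}$, while a direct factorial computation gives the ratio $c_j = \tfrac{\ell+1}{j-1}\binom{\ell+j-1}{j-2}$. Using the hypothesis $s_i<\tfrac{\ell+1}{r}$ for every $i$ together with the positivity of the $c_i$,
\[
\sum_{i=1}^{j-1} s_i c_i < \frac{\ell+1}{r}\sum_{i=1}^{j-1} c_i = \frac{\ell+1}{r}\binom{\ell+j-1}{j-2} \le \frac{\ell+1}{j-1}\binom{\ell+j-1}{j-2} = c_j ,
\]
where the middle inequality is exactly the point at which the hypothesis $j\le r$ (hence $\tfrac1r\le\tfrac1{j-1}$) is used, and the first inequality is strict because $s_i<\tfrac{\ell+1}{r}$ and $\sum_{i=1}^{j-1}c_i\ge c_1=1>0$.

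Finally I would deduce uniqueness by a top-down comparison. Suppose two admissible tuples $(s_i)$ and $(s_i')$ yield the same value $N$, and let $j^\ast$ be the largest index at which they differ, say $s_{j^\ast}>s_{j^\ast}'$. Subtracting the two expansions and isolating the leading term gives $(s_{j^\ast}-s_{j^\ast}')c_{j^\ast}=\sum_{i<j^\ast}(s_i'-s_i)c_i$. The left-hand side is at least $c_{j^\ast}$, whereas both $\sum_{i<j^\ast}s_i c_i$ and $\sum_{i<j^\ast}s_i' c_i$ lie in $[0,c_{j^\ast})$ by the inequality above, so their difference has absolute value strictly less than $c_{j^\ast}$ — a contradiction (the case $j^\ast=1$ is immediate since $c_1=1$). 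Hence $N$ together with $r$ and $\ell$ determines $(s_1,\dots,s_r)$ uniquely, which is the claim; equivalently, one recovers the digits greedily from the top via $s_j=\big\lfloor (N-\sum_{i>j}s_i c_i)/c_j\big\rfloor$.

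The two delicate points are getting the occurrence count right, which is what produces the Pascal-type weights $c_j$, and pinning down the ratio $c_j\big/\sum_{i<j}c_i=\tfrac{\ell+1}{j-1}$. Once these are in hand, the constraint $s_j<\tfrac{\ell+1}{r}$ with $j\le r$ is exactly what makes the positional inequality hold, and everything else is routine; I expect the main obstacle to be carrying out the binomial bookkeeping cleanly rather than any conceptual difficulty.
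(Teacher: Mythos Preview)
Your proof is correct and follows essentially the same route as the paper: both compute $\binom{w}{01^\ell}=\sum_{j=1}^r s_j\binom{\ell+j-1}{j-1}$ and establish the positional inequality $\sum_{i<j} s_i c_i<c_j$, then read off the digits from the top. The only cosmetic differences are that you obtain the partial-sum bound directly via the hockey-stick identity (the paper re-derives the same bound by induction on $t$) and you phrase uniqueness as a top-index comparison rather than writing the explicit floor formula; neither changes the substance of the argument.
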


\begin{proof}
Let us first express the number of occurrences of $01^\ell$ as subword in $0^{s_r}10^{s_{r-1}}1\cdots 0^{s_1}1^\ell$. By considering separately the different possible positions of the $0$ in the occurrence we obtain
\begin{equation}\label{eq_initial_observation}
  \binom{0^{s_r}10^{s_{r-1}}1\cdots 0^{s_2}10^{s_1}1^\ell}{01^\ell}= \sum_{j=1}^r s_j\binom{\ell+j-1}{\ell}=\sum_{j=1}^r s_j\binom{\ell+j-1}{j-1}\,.
\end{equation}

  Let $\beta=\max_{j}s_j$.
 We first show that for all $t\in\{1,\ldots, r\}$,
 \begin{equation}\label{eqindbinomcoef}
\sum_{j=1}^t s_j\binom{\ell+j-1}{j-1}\le \beta \binom{\ell+t}{t-1}\,.
 \end{equation}
  We proceed by induction on $t$. It is easily verified for $t=1$. Now if \eqref{eqindbinomcoef} holds for $t$, then
 \begin{align*}
\sum_{j=1}^{t+1} s_j\binom{\ell+j-1}{j-1}&=\sum_{j=1}^{t} s_j\binom{\ell+j-1}{j-1}+s_{t+1}\binom{\ell+t}{t}\le \beta \binom{\ell+t}{t-1}+s_{t+1}\binom{\ell+t}{t}\\
&\le \beta \left(\binom{\ell+t}{t-1}+\binom{\ell+t}{t}\right) =\beta \binom{\ell+t+1}{t}
 \end{align*}
  which concludes the inductive proof of \eqref{eqindbinomcoef}.
 
  Moreover, for all $t\in\{1,\ldots, r\}$,
  $\beta \binom{\ell+t}{t-1}< \frac{\ell+1}{r}\binom{\ell+t}{t-1}\le  \frac{\ell+1}{t}\binom{\ell+t}{t-1}= \binom{\ell+t}{t}$. Together with \eqref{eqindbinomcoef}, it implies that for all  $t\in\{1,\ldots, r\}$,
 \begin{equation}\label{eqnegligible}
0\le\sum_{j=1}^t s_j\binom{\ell+j-1}{j-1}< \binom{\ell+t}{t}\,.
 \end{equation}
Observe that, for all $t\in\{1,\ldots, r-1\}$, 
\begin{equation*}
    s_{t+1}=\frac{\sum_{j=1}^{t+1} s_j\binom{\ell+j-1}{j-1}-\sum_{j=1}^t s_j\binom{\ell+j-1}{j-1}}{\binom{\ell+t}{t}}\,.
\end{equation*}
But $s_{t+1}$ is an integer and by equation\eqref{eqnegligible} the right part of the fraction in the left-hand-side is in $[0,1[$ we deduce
  \begin{equation}\label{eq_deduced_value}
s_{t+1}=  \left\lfloor\frac{\sum_{j=1}^{t+1} s_j\binom{\ell+j-1}{j-1}}{\binom{\ell+t}{t}} \right\rfloor\,.
  \end{equation}
    By~Equations~\eqref{eq_initial_observation} and \eqref{eq_deduced_value}, we can deduce the value of $s_r$ from $r$, $l$ and
    $\sum_{j=1}^{r} s_j\binom{\ell+j-1}{j-1}$ which is itself deduced from $\binom{0^{s_r}10^{s_{r-1}}1\cdots 0^{s_2}10^{s_1}1^\ell}{01^\ell}$.
    From the value of $s_r$, we can now deduce $\sum_{j=1}^{r-1} s_j\binom{\ell+j-1}{j-1}$ and thus $s_{r-1}$ by  \eqref{eq_deduced_value}.
    Thus, by an {\oq}inverse induction{\cq}  from $r-1$ to $1$, we deduce the values of all the $s_j$.
\end{proof}

Lemma \ref{coeffromeqanoying} allows us to determine the length of multiple consecutive $0$-blocks with only one query under some strong hypothesis, but we can relax these hypotheses as follows. The idea is that if we have some large $s_i$ and a prefix, it is enough to know the value of these $s_i$ and of the prefix in order to remove their contribution before applying the previous lemma.
\begin{lemma}\label{coeffromeq}
Let $p$ and $v$ be words, $r$ and $s_1,\ldots, s_r$ be nonnegative integers such that $1\le r\le|v|_1 + 2$ and let $w=p0^{s_r}10^{s_{r-1}}\ldots 10^{s_1}1v$.
Suppose that $p$, $|v|_1$ and $r$ are known and that for all $j$, either $s_j$ is known or $s_j<\frac{|v|_1+2}{r}$, then the value of $\binom{w}{01^{1+|v|_1}}$ uniquely determines the values of all the unknown $s_j$ for $j\in\{1,\ldots,r\}$.
\end{lemma}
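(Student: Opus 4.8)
The plan is to reduce to Lemma~\ref{coeffromeqanoying} by isolating, inside the single query value, the contribution of the blocks $0^{s_j}$ and then neutralising the $s_j$ that are already known. Write $w = p\,M\,v$ with $M = 0^{s_r}10^{s_{r-1}}1\cdots 0^{s_1}1$, and set $\ell = |v|_1$, so that the query concerns $\binom{w}{01^{\ell+1}}$. First I would classify every occurrence of $01^{\ell+1}$ according to where its leading letter $0$ sits: inside $p$, inside one of the blocks $0^{s_j}$, or inside $v$. If the $0$ lies in $v$, the required $\ell+1$ letters $1$ would all have to be chosen from $v$, which contains only $\ell = |v|_1$ of them, so this case contributes nothing --- this is precisely why the query uses $01^{1+|v|_1}$ rather than $01^{|v|_1}$. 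If the $0$ lies in $p$, the number of letters $1$ to its right equals (the number of $1$'s of $p$ after that position) $+\,r + |v|_1$, a quantity depending only on the known data $p$, $r$, $|v|_1$; summing $\binom{\cdot}{\,\ell+1}$ over the zeros of $p$ yields a known constant $C_p$.

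The heart of the computation is the middle case. A leading $0$ chosen inside the block $0^{s_j}$ has exactly $j + |v|_1$ letters $1$ to its right (the $j$ ones of $M$ following the blocks $0^{s_j},\ldots,0^{s_1}$, together with the $|v|_1$ ones of $v$), so each of the $s_j$ zeros of that block contributes $\binom{j+|v|_1}{\ell+1}=\binom{(\ell+1)+j-1}{j-1}$ occurrences. This gives
\begin{equation*}
\binom{w}{01^{\ell+1}} = C_p + \sum_{j=1}^r s_j\binom{(\ell+1)+j-1}{j-1},
\end{equation*}
an expression of exactly the shape of the identity~\eqref{eq_initial_observation} from the proof of Lemma~\ref{coeffromeqanoying}, but with $\ell$ replaced by $\ell+1$. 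Subtracting the known constant $C_p$ together with the known terms $s_j\binom{(\ell+1)+j-1}{j-1}$ over the indices $j$ whose $s_j$ is known, I obtain a computable number $S'=\sum_{j\in U}s_j\binom{(\ell+1)+j-1}{j-1}$, where $U$ denotes the set of indices with unknown $s_j$.

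Finally I would recover the unknown $s_j$ by invoking Lemma~\ref{coeffromeqanoying} with its parameter $\ell$ replaced by $\ell+1$. The only care needed is that $U$ need not be an initial segment of $\{1,\ldots,r\}$, so I would introduce the auxiliary sequence $\hat s_j = s_j$ for $j\in U$ and $\hat s_j = 0$ otherwise; then $S'=\sum_{j=1}^r \hat s_j\binom{(\ell+1)+j-1}{j-1}$, and each $\hat s_j$ satisfies $\hat s_j < \frac{|v|_1+2}{r}=\frac{(\ell+1)+1}{r}$ (for $j\in U$ by hypothesis, and trivially for $j\notin U$). Since moreover $1\le r\le |v|_1+2 = (\ell+1)+1$, Lemma~\ref{coeffromeqanoying} applies and determines all the $\hat s_j$ uniquely from $S'$, $r$ and $\ell+1$; in particular it yields $s_j=\hat s_j$ for every $j\in U$. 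I expect the main obstacle to be purely bookkeeping: carefully counting the $1$'s to the right of a zero in each region, so as to verify both that the $v$-case vanishes and that the block-$j$ coefficient is $\binom{(\ell+1)+j-1}{j-1}$, and matching the parameter shift $\ell\mapsto\ell+1$ to the hypotheses of Lemma~\ref{coeffromeqanoying}.
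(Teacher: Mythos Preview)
Your proposal is correct and follows essentially the same approach as the paper: decompose $\binom{w}{01^{1+|v|_1}}$ according to the position of the leading $0$, subtract the known contributions (from $p$ and from the known $s_j$), and apply Lemma~\ref{coeffromeqanoying} with its parameter $\ell$ replaced by $1+|v|_1$. The only cosmetic difference is that the paper zeros out the known $s_j$ that exceed the threshold (setting $s'_j=0$ when $s_j\ge\frac{|v|_1+2}{r}$) whereas you zero out all known $s_j$; both auxiliary sequences satisfy the smallness hypothesis needed for Lemma~\ref{coeffromeqanoying}, so both work.
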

\begin{proof}
For all $j\in\{1,\ldots,r\}$, let $s'_j$ be such that if $s_j<\frac{|v|_1+2}{r}$, then $s'_j=s_j$ and $s'_j=0$ otherwise. Then $s_j-s'_j$ is known for all $j$ (it is $s_j$ if $s_j$ is known and $0$ otherwise) and for all $j$, $s'_j<\frac{|v|_1+2}{r}$.

Now, by considering the possible positions of the $0$ in the occurrences of $01^{1+|v|_1}$, we get
\begin{align*}
\binom{w}{01^{1+|v|_1}}
&= \binom{p1^{r+|v|_1}}{01^{1+|v|_1}}+\binom{0^{s_r}10^{s_{r-1}}\ldots 10^{s_1}1^{1+|v|_1}}{01^{1+|v|_1}}\\
&=\binom{p1^{r+|v|_1}}{01^{1+|v|_1}}+ \sum_{j=1}^r s_j \binom{j+|v|_1}{1+|v|_1}\\
&=\binom{p1^{r+|v|_1}}{01^{1+|v|_1}}+ \sum_{j=1}^r (s_j-s'_j) \binom{j+|v|_1}{1+|v|_1}+\sum_{j=1}^r s'_j\binom{j+|v|_1}{1+|v|_1}\\
&=\binom{p1^{r+|v|_1}}{01^{1+|v|_1}}+ \sum_{j=1}^r (s_j-s'_j) \binom{j+|v|_1}{1+|v|_1}+\binom{0^{s'_r}10^{s'_{r-1}}\ldots 10^{s'_1}1^{1+|v|_1}}{01^{1+|v|_1}}\,.
\end{align*}
It implies that,
$$\binom{0^{s'_r}10^{s'_{r-1}}\ldots 10^{s'_1}1^{1+|v|_1}}{01^{1+|v|_1}}=\binom{w}{01^{1+|v|_1}}-\binom{p1^{r+|v|_1}}{01^{1+|v|_1}}- \sum_{j=1}^r (s_j-s'_j) \binom{j+|v|_1}{1+|v|_1}\,.$$
By assumption, $\binom{w}{01^{1+|v|_1}}$, $p$, $r$, $|v|_1$ and for all $j$, $(s_j-s'_j)$ are known. Hence, the quantity $\binom{0^{s'_r}10^{s'_{r-1}}\ldots 10^{s'_1}1^{1+|v|_1}}{01^{1+|v|_1}}$ is uniquely determined. For all $j$, $s'_j<\frac{|v|_1+2}{r}$ and we deduce from Lemma \ref{coeffromeqanoying} that the values of all the $s'_j$ are uniquely determined which concludes our proof.
\end{proof}

For any word $w$ over $\{0, 1\}$ decomposed as $w=0^{s_0}10^{s_1}1\cdots0^{s_{t-1}}10^{s_{t}}$, we call $i$ the \emph{index} of the $0$-block $0^{s_{i}}$.
If we want to use the previous lemma to reconstruct a word, we first need to determine the indices of all the $0$-blocks that are longer than some predetermined length.

\begin{lemma}\label{locatelargblocks}
Let $w\in\{0,1\}^*$ and $m$ be an integer.
Let $I$ be the set of indices of $0$-blocks of $w$ of length at least $m$. Suppose that we know $|w|$ and $|w|_0$ (and so also $|w|_1 = |w|-|w|_0$), then the number of \#-subword queries needed to determine $I$ is at most
$$\frac{2 |w|_0\lceil\log_2 (|w|_1+1)\rceil}{m}\,.$$
\end{lemma}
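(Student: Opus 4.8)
The plan is to reduce the problem to a binary search over the $|w|_1+1$ zero-blocks, powered by a single-query ``threshold'' oracle. Write $w=0^{s_0}10^{s_1}1\cdots 10^{s_{t}}$ with $t=|w|_1$, so the blocks are indexed $0,\dots,t$ and $I=\{i : s_i\ge m\}$. Since the blocks are disjoint and $\sum_i s_i=|w|_0$, at most $|w|_0/m$ blocks are heavy, and more generally any family of pairwise disjoint block-ranges each containing at least $m$ zeros has at most $|w|_0/m$ members; this is the counting fact that will control the number of queries.

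The key step, and the heart of the proof, is the following claim: for all $0\le \ell\le u\le t$, a single $\#$-subword query decides whether $\sum_{i=\ell}^{u}s_i\ge m$, namely the query $\binom{w}{1^{\ell}0^{m}1^{t-u}}$, whose answer is nonzero if and only if blocks $\ell,\dots,u$ together contain at least $m$ zeros. I would prove this by a direct positional analysis of occurrences: in any occurrence of $1^\ell 0^m 1^{t-u}$, the $\ell$ leading ones force the first chosen $0$ to lie in a block of index $\ge \ell$, the $t-u$ trailing ones force the last chosen $0$ to lie in a block of index $\le u$ (a $0$ in block $i$ has exactly $i$ ones before it and $t-i$ ones after it), and the $m$ chosen zeros then all lie in blocks $\ell,\dots,u$; conversely, if these blocks contain at least $m$ zeros one can freely pick the required ones on either side, so an occurrence exists exactly in that case.

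With this oracle in hand, the algorithm is a balanced binary search tree on the $N=|w|_1+1$ blocks, of depth $D=\lceil\log_2(|w|_1+1)\rceil$. We only ever expand a range that is known to contain at least $m$ zeros; the root $[0,t]$ is such a range for free, because $|w|_0$ is known (if $|w|_0<m$ then $I=\emptyset$ and no query is needed). To expand a range we split it into two halves and spend one query per half to test, via the oracle, whether that half contains at least $m$ zeros, recursing into a half exactly when the test succeeds. At a singleton range $[i,i]$ the oracle answer is precisely whether $s_i\ge m$, so we output $i\in I$ in that case. Correctness is immediate: every heavy block $i$ has all of its ancestor ranges containing at least $m$ zeros, so it is never pruned and is reported at the leaf $[i,i]$, while no light block is ever reported.

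It remains to count queries, which is routine. We spend exactly two queries per expanded internal node. The expanded nodes at a fixed depth are pairwise disjoint ranges each containing at least $m$ zeros, so by the counting fact there are at most $|w|_0/m$ of them; since expanded internal nodes occur only at depths $0,\dots,D-1$, there are at most $D\,|w|_0/m$ of them in total, giving at most $2D|w|_0/m=\frac{2|w|_0\lceil\log_2(|w|_1+1)\rceil}{m}$ queries. The only genuine obstacle is isolating the threshold oracle of the second step; once that single-query test is available, the binary-search scheme and its accounting fall out directly.
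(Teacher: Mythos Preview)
Your proposal is correct and follows essentially the same approach as the paper: the same threshold oracle $\binom{w}{1^{\ell}0^{m}1^{|w|_1-u}}$ and the same balanced divide-and-conquer over the $|w|_1+1$ block indices, with the identical counting argument that at each depth at most $|w|_0/m$ ranges can carry $\ge m$ zeros. The only cosmetic difference is bookkeeping: the paper attributes one query to each recursive call (so it queries the root), while you charge two queries to each expanded internal node and test the root for free from the known value of $|w|_0$; the resulting bound is the same.
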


\begin{proof}
We use Algorithm \ref{alglongblocks} to determine $I$ calling it with $\ell = 0$ and $u = |w|_1$.
Note that $|w|_1 = |w| - |w|_0$ is known.
\begin{algorithm}[ht]
\caption{An algorithm that prints the indices $i\in\{\ell,\ldots,u\}$ of the $0$-blocks\label{alglongblocks} of length at least $m$ that occur in $w$}
\begin{algorithmic}
\Procedure{Recblocks}{$w$, $m$, $\ell$, $u$}
    \If{$\binom{w}{1^\ell0^m 1^{|w|_1-u}}\ge1$}  
        \If{$u=\ell$}
            \State \textbf{Print} $\ell$ 
        \Else
            \State\Call{Recblocks}{$w$, $m$, $\ell$, $\lfloor\frac{\ell+u}{2}\rfloor$}
            \State\Call{Recblocks}{$w$, $m$, $\lfloor\frac{\ell+u}{2}\rfloor+1$, $u$}
        \EndIf
    \EndIf
\EndProcedure
\end{algorithmic}
\end{algorithm}

The condition of the main {\oq}if{\cq} verifies that the lengths of the $0$-blocks whose indices are in $\{\ell,\ldots,u\}$ sum to at least $m$. If it doesn't then we know that none of these blocks can have length at least $m$ so we do not need to call the function recursively on any of them. From this, verifying the correctness of the algorithm is rather straightforward.

Let us now bound the total number of queries.
For this, we consider the \emph{tree of recursive calls to Recblocks} defined as follows: the root of the tree is the initial call with $\ell = 0$ and $u = |w|_1$; a call $a$ is the child of another call $b$ if the call $a$ was made in $b$. The \emph{depth} of a call is its distance to the root. The \emph{weight} of a call is the quantity $u+1-\ell$. For any call of weight $x$,
the weights of its children are $\lceil x/2\rceil$ or $\lfloor x/2\rfloor$ (and the sum of the weights of the two children is $x$).
Let $f$ be the function such that $f:x\rightarrow\lceil\frac{x}{2}\rceil$. 
The root has weight $|w|_1+1$ and $f$ is a non-decreasing function, so any call of depth $d$ has weight at most $f^d(|w|_1+1)$.
For any integer $x$, $f(x) \le \frac{x+1}{2}$, and, in particular, for all $d\ge1$,
$f^{d}(|w|_1+1)\le \frac{f^{d-1}(|w|_1+1)+1}{2}$. 
By induction on $d$, $f^d(|w|_1+1)< \frac{|w|_1+1}{2^d}+1$.
Any call of depth $\lceil\log_2 (|w|_1+1)\rceil$ has weight at most $1$ (the weight is an integer smaller than $2$) and is a leaf of the tree. Hence, the depth of any call is at most $\lceil\log_2 (|w|_1+1)\rceil$.

Moreover, one easily verifies by induction on the depth that for any two different calls $c$ and $c'$ at the same depth the corresponding intervals $[\ell,u]$ and $[\ell',u']$ are disjoint. We say that a call with the values $\ell$ and $u$ \emph{owns} the occurrences of $0$ that belongs to all the blocks of indices between $\ell$ and $u$. Then by the previous remark, the set of occurrences of $0$ owned by two calls at the same depth are disjoint.
Since the condition of the first {\oq}if{\cq} is true if the call owns at least $m$ occurrences of $0$, we deduce that there are at most $\frac{|w|_0}{m}$ such calls on any given depth.
Since each such call has two children, we deduce that the number of calls at any depth is at most $2\frac{|w|_0}{m}$. Hence the total number of calls, is at most $\frac{2 |w|_0\lceil\log_2 (|w|_1+1)\rceil}{m}$. Since we ask one query by call this concludes the proof.
\end{proof}

We are now ready to show our main result.
We will first use the algorithm from Lemma \ref{locatelargblocks} to find all the blocks
that are of length $\left\lceil \sqrt{n \log n}\right\rceil$
and then we use Lemma \ref{coeffromeq} to determine all the other blocks.

\begin{proof}[Proof of Theorem \ref{T_binarycountsubword}]~

\textbf{Phase 1}.
Let $w$ be the unknown word. 
It costs two queries to get $|w|_0=\binom{w}{0}$ and $|w|_1=\binom{w}{1}$.
Then $n = |w| = |w|_0+|w|_1$ is known. 
Suppose without loss of generality that $\binom{w}{0}\ge n/2 \ge\binom{w}{1}$ (otherwise simply exchange the role of $0$ and $1$ in the following).

\textbf{Phase 2}.
Let $m=\left\lceil \sqrt{n \log n}\right\rceil$.
We use the algorithm from Lemma \ref{locatelargblocks} to locate all the $0$-blocks of length at least $m$. There are at most $\frac{n}{m}$ such blocks and we can use one query for each of them to determine their respective length:
Indeed if the block is at index $i$ with $i \in \{0, \ldots, |w|_1\}$,
its length is $\binom{w}{1^i01^{|w|_1-i}}$.
Thus locating $0$-blocks of length at least $m$ together with their lengths
require at most $\frac{2 |w|_0\lceil\log (|w|_1+1)\rceil}{m} + \frac{n}{m} $  queries.
This number of queries is less than $3\frac{n \log n}{m}\le3 \sqrt{n \log n}$.

\textbf{Phase 3}.
We now need to determine the lengths of $0$-blocks of length at most $m$.
We first determine the $0$-blocks occurring before the $\left\lceil \frac{|w|_1}{2} \right\rceil$ last occurrences of $1$.
Secondly, we determine the $0$-blocks occurring after the $\left\lceil \frac{|w|_1}{2} \right\rceil$ first occurrences of $1$.
After this, the lengths of all the $0$-blocks are known and we know $w$.
We describe only how to determine the first half of the blocks, since reconstructing the second half of the blocks can be done symmetrically.

There are $\left\lceil \frac{|w|_1}{2} \right\rceil+1$ $0$-blocks before the $\left\lceil \frac{|w|_1}{2} \right\rceil$ last occurrences of $1$.
We determine the unknown blocks among them in at most $m$
steps from left to right considering, at each step, at most $r = \left\lfloor\frac{|w|_1}{2m}\right\rfloor$ blocks. Since $mr\ge\frac{|w|_1}{2}-m$, we might miss up to $m+1$ blocks after this, that we can recover one by one for up to $m+1$ extra queries.
At one step $w = p0^{s_r}10^{s_{r-1}}\cdots 10^{s_1}1v$
with $p$ an already known prefix of $w$ (initially $p$ is the empty word) and $|v|_1 \geq \left\lceil \frac{|w|_1}{2} \right\rceil$.
For each $i \in \{1, \ldots, r\}$,
if $s_i$ is unknown then $s_i < m = \frac{|w|_1/2}{|w|_1/(2m)} < \frac{|v|_1+2}{r}$.
By Lemma~\ref{coeffromeq},
only one query is needed to know the $r$ blocks.
Hence, we determine the $0$-blocks occurring before the $\left\lceil \frac{|w|_1}{2} \right\rceil$ last occurrences of $1$
in at most $2m+1 = 1 + 2\left\lceil \sqrt{n \log n} \right\rceil$ queries (and similarly to know the $0$-blocks
occurring after the $\left\lceil \frac{|w|_1}{2} \right\rceil$ last occurrences of $1$).

In total, our strategy uses
$2+3\left\lceil \sqrt{n \log n}\right\rceil+2(1 + 2\left\lceil \sqrt{n \log n} \right\rceil) = 7 \left\lceil \sqrt{n \log n} \right\rceil+4$.
\end{proof}
For any alphabets $\mathcal{A}$ and $\mathcal{B} \subseteq \mathcal{A}$ and any word $u$ over $\mathcal{A}$,
the \textit{projection} of $u$ onto $\mathcal{B}$
is the word obtained by removing from $u$ any letter that does not belong to $\mathcal{B}$.
We denote it $\pi_{\mathcal{B}}(u)$. For instance, $\pi_{\{0,1\}}(0120201)=01001$.
Over an alphabet of cardinality $k$ if we know the projections over all the binary sub-alphabets, we can uniquely determine the whole word \cite[Lemma 6.2.19]{Lothaire1983book}. So Theorem \ref{T_binarycountsubword} has the following corollary.
\begin{corollary}\label{C_sharp_arbitrary_alphabet}
The number of $\#$-subword queries needed to reconstruct a word of length $n$ over an alphabet of cardinality $k$ is at most $\binom{k}{2}(7\left\lceil \sqrt{n \log n}\right\rceil+4)\,.$
\end{corollary}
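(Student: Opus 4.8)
The plan is to reduce the multi-letter alphabet case to the binary case already handled by Theorem~\ref{T_binarycountsubword}, exploiting the cited fact \cite[Lemma 6.2.19]{Lothaire1983book} that a word over an alphabet $\mathcal{A}$ of cardinality $k$ is uniquely determined by the collection of its projections $\pi_{\{a,b\}}(w)$ onto all $\binom{k}{2}$ binary sub-alphabets $\{a,b\} \subseteq \mathcal{A}$. The key structural observation that makes this reduction work for $\#$-subword queries is that a $\#$-subword query about the binary word $\pi_{\{a,b\}}(w)$ can be \emph{simulated} by a single $\#$-subword query about $w$ itself: for any word $u$ over $\{a,b\}$, one has $\binom{\pi_{\{a,b\}}(w)}{u} = \binom{w}{u}$, since letters outside $\{a,b\}$ never participate in an occurrence of $u$ as a subword, so removing them changes nothing. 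Thus querying the projection costs exactly one query against the real word.

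First I would fix an arbitrary labelling of the $\binom{k}{2}$ unordered pairs of distinct letters of $\mathcal{A}$. For each such pair $\{a,b\}$, I run the strategy of Theorem~\ref{T_binarycountsubword} pretending the unknown word is the binary word $\pi_{\{a,b\}}(w)$ over the two-letter alphabet $\{a,b\}$; by the simulation observation above, every query that strategy issues about $\pi_{\{a,b\}}(w)$ is answered by posing the identical query about $w$. Since the binary strategy is valid whether or not the length is known, and here the projected length $|\pi_{\{a,b\}}(w)| = |w|_a + |w|_b \le n$ is a priori unknown, I invoke the ``$n$ unknown'' guarantee of the theorem. This recovers $\pi_{\{a,b\}}(w)$ exactly. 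Doing this for all pairs recovers every binary projection, and by \cite[Lemma 6.2.19]{Lothaire1983book} this determines $w$.

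For the count, each of the $\binom{k}{2}$ projections is reconstructed using at most $7\lceil \sqrt{n' \log n'}\rceil + 4$ queries, where $n' = |\pi_{\{a,b\}}(w)|$ is the length of that projection. Since $n' \le n$ and the bound is monotone in the length, each projection costs at most $7\lceil \sqrt{n \log n}\rceil + 4$ queries. Summing over the $\binom{k}{2}$ pairs gives the claimed bound $\binom{k}{2}\bigl(7\lceil \sqrt{n \log n}\rceil + 4\bigr)$.

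I do not anticipate a genuine obstacle here, as the corollary is essentially a black-box composition. The one point deserving care is the simulation identity $\binom{\pi_{\{a,b\}}(w)}{u} = \binom{w}{u}$ for $u \in \{a,b\}^*$, which must be stated explicitly so the reader sees that projection queries are genuinely free to simulate; I would give the one-line justification that an occurrence of $u$ in $w$ uses only positions carrying letters $a$ or $b$, establishing a bijection with occurrences of $u$ in $\pi_{\{a,b\}}(w)$. A secondary point worth flagging is that the per-projection bound uses the $n'$ of the projection rather than $n$, so the monotonicity step replacing $n'$ by $n$ should be noted, together with the fact that the binary strategy does not require knowing the projected length in advance.
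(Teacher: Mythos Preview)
Your proposal is correct and follows essentially the same approach as the paper: reduce to the $\binom{k}{2}$ binary projections via \cite[Lemma~6.2.19]{Lothaire1983book} and apply Theorem~\ref{T_binarycountsubword} to each. The paper states this in one sentence and leaves implicit the two points you spell out (the simulation identity $\binom{\pi_{\{a,b\}}(w)}{u}=\binom{w}{u}$ for $u\in\{a,b\}^*$ and the monotonicity step replacing $n'=|\pi_{\{a,b\}}(w)|$ by $n$), so your write-up is in fact more careful than the paper's own.
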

In Theorem \ref{T_binarycountsubword} and Corollary \ref{C_sharp_arbitrary_alphabet}, we did not try to optimize the multiplicative constant, because we believe that the $\sqrt{n \log n}$ bound is not {\oq}sharp up to a multiplicative constant{\cq}. As suggested by Theorem \ref{expectrunningtime}, the number of required queries in Theorem 1 and Corollary 6 might be in $O(\log n)$.

As we will see in Lemma~\ref{expectnolongbloc}, the probability that there is a $0$-block of length more than $\lceil2\log_2(n)\rceil$ is small. 
\begin{lemma}\label{expectnolongbloc}
Let $w$ be a word taken uniformly at random among all binary words of length $n$. The probability that
$w$ contains the factor $0^{\lceil2\log_2(n)\rceil}$ is at most $1/n$.
\end{lemma}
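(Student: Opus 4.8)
The plan is to bound the probability via a union bound over all possible starting positions of an occurrence of the factor $0^{\lceil 2\log_2(n)\rceil}$. Write $\ell = \lceil 2\log_2(n)\rceil$ for brevity. For a fixed position $i$, the probability that the $\ell$ consecutive letters starting at position $i$ are all $0$ is exactly $2^{-\ell}$, since the letters of $w$ are independent uniform bits. There are at most $n - \ell + 1 \le n$ valid starting positions for a factor of length $\ell$. Hence, by the union bound, the probability that some occurrence of $0^\ell$ appears is at most $n \cdot 2^{-\ell}$.

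The remaining step is the elementary estimate $n \cdot 2^{-\ell} \le 1/n$, which amounts to checking $2^\ell \ge n^2$. Since $\ell = \lceil 2\log_2(n)\rceil \ge 2\log_2(n) = \log_2(n^2)$, we get $2^\ell \ge 2^{\log_2(n^2)} = n^2$, giving exactly the desired bound $n \cdot 2^{-\ell} \le n / n^2 = 1/n$.

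I expect no genuine obstacle here: this is a textbook first-moment / union-bound argument, and the only thing to be careful about is the direction of the ceiling (it makes the threshold larger, hence the probability smaller, so it works in our favour) and the bookkeeping of edge cases for very small $n$. For small $n$ one should note that if $\ell > n$ then no occurrence is possible and the probability is $0 \le 1/n$ trivially, so the estimate holds in all cases. I would state the union bound as a single displayed inequality, then plug in the value of $\ell$ to finish.

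\begin{proof}
Set $\ell = \lceil 2\log_2(n)\rceil$. If $\ell > n$, then $w$ cannot contain a factor of length $\ell$ and the probability is $0$. Otherwise, for each starting position $i \in \{1, \ldots, n-\ell+1\}$, the probability that the factor of length $\ell$ starting at position $i$ equals $0^\ell$ is $2^{-\ell}$, since the $\ell$ letters involved are independent and uniformly distributed over $\{0,1\}$. By the union bound over the at most $n$ possible starting positions, the probability that $w$ contains $0^\ell$ as a factor is at most
$$
n \cdot 2^{-\ell} \le n \cdot 2^{-2\log_2(n)} = n \cdot n^{-2} = \frac{1}{n}\,,
$$
where we used $\ell \ge 2\log_2(n)$. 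This proves the claim.
\end{proof}
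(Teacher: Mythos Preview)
Your proof is correct and follows essentially the same approach as the paper: a union bound over the at most $n$ starting positions, each contributing probability $2^{-\lceil 2\log_2 n\rceil}\le 1/n^2$. The only (cosmetic) difference is that you explicitly handle the trivial case $\ell>n$, which the paper omits.
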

\begin{proof}
Let $m=\lceil2\log_2(n)\rceil$. Let $w_1,\ldots, w_n\in\{0,1\}$ be such that $w=w_1\cdots w_n$.
For all $i\in\{1,\ldots, n-m+1\}$, let $E_i$ be the event that $w_iw_{i+1}\ldots w_{i+m-1}=0^m$. Then for all $i$, $\mathbb{P}(E_i)=2^{-m}\le 1/n^2$. By union bound,
$$\mathbb{P}(0^m \text{ is a factor of } w)= \mathbb{P}(\cup_{i=1}^{n-m+1} E_i)\le\sum_{i=1}^{n-m+1}\mathbb{P}(E_i)\le \frac{1}{n}$$ as desired.
\end{proof}
\begin{proof}[Proof of Theorem \ref{expectrunningtime}]
First, we determine the number of $0$ and $1$ in $w$ in 2 queries. 
Let $m=\lceil2\log_2(n)\rceil$.
We first assume that there is no factor $0^m$ in $w$. 
We can now apply Lemma~\ref{coeffromeq} as in Phase 3 of the proof of Theorem \ref{T_binarycountsubword}, but with $m=\lceil2\log_2(n)\rceil$. We now have a candidate word $w'$ and we can ask one more question, $\binom{w}{w'}$, to verify if $w = w'$ (this might not be the case, if our starting assumption was false). All of this take $\mathcal{O}(\log_2(n))$ queries. 

If we did not obtain the correct word, we know that our assumption was false and we use Theorem \ref{T_binarycountsubword} to find $w$ in $\mathcal{O}(\sqrt{n\log_2(n)})$ extra queries.  By Lemma \ref{expectnolongbloc}, this happens with probability at most $1/n$, so the expected number of queries of this procedure is at most $\mathcal{O}(\log_2(n))+\mathcal{O}(\sqrt{n\log_2(n)}/n)=\mathcal{O}(\log_2(n))$.
\end{proof}
\section{Exists-subword queries\label{Sec:exist_queries}}
In this section, we focus on queries in the form {\oq}Is $u$ a subword of $w$?{\cq}
or equivalently
{\oq}Is $\binom{w}{u} \geq 1$?{\cq}.
We call such a query an $\exists$-subword query.
The reconstruction problem using $\exists$-subword queries of a word $w$ of unknown length $n$
over an alphabet $\mathcal{A}$ of cardinality $k$ was solved by S.~S.~Skiena and G.~Sundaram \cite{Skiena_Sundaram1993LNCS,Skiena_Sundaram1995JCB} using
$1.59 n \log_2(k) + 2k \log_2(n) + 5k$  queries.
We improve the main coefficient of the bound, replacing $1.59$ by $1$ which is optimal (any such algorithm requires at least  $n\log_2(k)$ queries in the worst case \cite{Skiena_Sundaram1993LNCS,Skiena_Sundaram1995JCB}).

\begin{theorem}\label{T_existssubwordkletters}
The number of $\exists$-subword queries needed to reconstruct an unknown word $w$ of unknown length $n$ over an alphabet $\mathcal{A}$ of cardinality $k$
is at most
$$n \lceil \log_2(k) \rceil +k \left(2+ \lfloor\log_2(n+1) \rfloor\right)\,.$$
\end{theorem}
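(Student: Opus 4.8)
The plan is to reconstruct $w$ one letter at a time, from left to right, using $\exists$-subword queries, and to show that each letter costs about $\lceil \log_2 k\rceil$ queries plus a small additive overhead that can be amortized against the alphabet size. The key structural observation is that if we already know a prefix $p$ of $w$, then asking whether $pa$ is a subword of $w$ is \emph{not} the right question, because $a$ need not occur immediately after $p$. Instead, the natural quantity to track is, for each letter $a \in \mathcal{A}$, the number of occurrences of $a$ we have seen so far, and to query subwords of the form (current known skeleton) followed by a test block. Concretely, I would maintain for each letter $a$ a counter $c_a$ and note that $a^{c_a+1}$ being a subword of $w$ is equivalent to $|w|_a \ge c_a+1$; more usefully, to identify the next letter at a given position, I would binary-search over the alphabet using subword queries that fix everything determined so far.

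First I would handle the unknown length. Since $n$ is not given, I would first determine, for each letter $a$, the value $|w|_a$ using $\exists$-subword queries of the form ``is $a^j$ a subword of $w$?''. By binary search on $j \in \{0,1,\ldots\}$, doubling then bisecting, the count $|w|_a = \binom{w}{a}$ can be found in $O(\log_2(|w|_a+1)) \le 2 + \lfloor \log_2(n+1)\rfloor$ queries. Summing over the $k$ letters, and noting $\sum_a |w|_a = n$ so that $\sum_a \log_2(|w|_a+1)$ is maximized when the counts are balanced, this gives the $k\bigl(2 + \lfloor\log_2(n+1)\rfloor\bigr)$ term; I would use concavity of $\log_2$ (Jensen) to bound the sum, and the fact that once all $|w|_a$ are known $n = \sum_a |w|_a$ is known, so the ``length unknown'' case reduces to the ``length known'' case at no extra asymptotic cost.

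Second, with all letter counts (hence $n$) in hand, I would reconstruct the word position by position. Suppose positions $1,\ldots,i-1$ are known, forming a prefix $p$, and let the multiset of remaining letters (with their residual counts) be known. To find $w_i$, I binary-search the alphabet: I fix an ordering of $\mathcal{A}$ and ask $\lceil\log_2 k\rceil$ queries, each of the form ``does the prefix $p$ extended by a suitable test, and then the guaranteed remaining suffix-letters, embed as a subword?'', designed so that each query halves the set of candidate letters for $w_i$. The crucial point making each such query decide one bit of $w_i$ is that the residual letter counts let me construct a witness subword that is present if and only if $w_i$ lies in the chosen half; this is where the bookkeeping with the known counts is essential. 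Each position thus costs $\lceil \log_2 k\rceil$ queries, and there are $n$ positions, contributing the $n\lceil \log_2(k)\rceil$ term.

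The main obstacle I expect is designing the per-position query precisely so that a single batch of $\lceil\log_2 k\rceil$ subword queries genuinely isolates $w_i$ without wasting queries, and in particular ensuring that knowing the \emph{counts} of the remaining letters (rather than their exact arrangement) suffices to build the separating witnesses; the subtlety is that an $\exists$-subword query only reports embeddability, so I must argue that the prefix $p$ together with an appropriate choice of a single additional letter and a canonical multiset-suffix is embeddable exactly when $w_i$ has the desired property. Making this argument airtight, and checking the boundary cases (letters with zero residual count, the last few positions, and the interaction between the counting phase and the reconstruction phase so the additive $k(2+\lfloor\log_2(n+1)\rfloor)$ and multiplicative $n\lceil\log_2 k\rceil$ terms do not double-count), is the delicate part; the arithmetic of summing the logarithms via concavity is routine by comparison.
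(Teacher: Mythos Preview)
Your proposal has a genuine gap in the second phase. You assert that knowing the \emph{residual letter counts} suffices to build, for any subset $S\subseteq\mathcal{A}$, a single witness word $u$ such that $u$ is a subword of $w$ if and only if $w_i\in S$. This is false. Take $\mathcal{A}=\{a,b,c,d\}$ with all counts equal to $1$, and try to decide whether $w_1\in\{a,b\}$ with one query. A short case check over all $24$ permutations shows that no word $u$ is a subword of exactly those $w$ with $w_1\in\{a,b\}$: for instance, to be a subword of $abcd$ but not of $cabd$ forces $u\in\{abc,acd,bcd\}$, and each of these fails to be a subword of, respectively, $bacd$, $adcb$, $bdca$, all of which have $w_1\in\{a,b\}$. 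So counts alone do not let you halve the alphabet at a given position.

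What the paper does instead is a divide-and-conquer on the alphabet, not on positions. Split $\mathcal{A}$ into two halves $\mathcal{B},\mathcal{C}$; recursively determine the full \emph{projections} $\pi_{\mathcal{B}}(w)$ and $\pi_{\mathcal{C}}(w)$; then merge them in $n-1$ queries by asking, at each step, whether $p\,u_{i+1}\,v[j{+}1..|v|]$ is a subword, where $u=\pi_{\mathcal{B}}(w)$, $v=\pi_{\mathcal{C}}(w)$, and $p$ is the prefix found so far. The point is that this query does decide whether the next letter lies in $\mathcal{B}$ or $\mathcal{C}$, precisely because $v[j{+}1..|v|]$ is the \emph{exact ordered} tail of $\pi_{\mathcal{C}}(w)$, not just its multiset: if $w_{|p|+1}\in\mathcal{C}$, then after placing $u_{i+1}$ strictly to the right of position $|p|+1$ there are too few $\mathcal{C}$-letters left to embed all of $v[j{+}1..|v|]$. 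This is exactly the information your counts-only scheme lacks. The recursion has depth $\lceil\log_2 k\rceil$, each level costs $\le n$ merge queries, and the base cases (size $2$ or $3$) absorb the $k(2+\lfloor\log_2(n{+}1)\rfloor)$ term via a single doubling/bisection per leaf rather than $k$ separate count determinations.

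A smaller issue: your first-phase accounting is off by a factor of two. Doubling-then-bisecting to find $|w|_\alpha$ with $n$ unknown costs $2\lfloor 1+\log_2(|w|_\alpha{+}1)\rfloor$ queries, not $2+\lfloor\log_2(n{+}1)\rfloor$; summing this over all $k$ letters gives $2k(1+\lfloor\log_2(n{+}1)\rfloor)$, which already overshoots the additive term in the stated bound. The paper avoids this by invoking the doubling search only once per leaf of the alphabet recursion (roughly $k/2$ times), and the Jensen argument you mention is unnecessary.
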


Actually, our approach is similar to the method used in \cite{Skiena_Sundaram1993LNCS,Skiena_Sundaram1995JCB}.
We act essentially by dichotomy on the alphabet but when reconstructing words from their projections on a smaller alphabet we improve the bound on the number of queries. Also on small alphabets we use a linear decomposition instead of a binary decomposition in order to reduce the number of queries needed to deduce the number of occurrences of some letters.

To prove Theorem~\ref{T_existssubwordkletters} we use the next two lemmas.
The first one considers the reconstruction problem in the one letter alphabet case.
The second one describes upper bounds on the number of queries needed to reconstruct a word from projections on disjoint alphabets.

\begin{lemma}
\label{L_exists_one_letter}
Given an unknown nonempty word $w$ of length $n$ over an alphabet $\mathcal{A}$ and a letter $\alpha \in \mathcal{A}$,
the value $|w|_\alpha$ can be determined using
\begin{itemize}
\item at most $2\lfloor1 +\log_2(|w|_\alpha + 1)\rfloor$
$\exists$-subword queries if $n$ is unknown and
\item at most $\lceil \log_2(n+1) \rceil$
$\exists$-subword queries if $n$ is known.
\end{itemize}
\end{lemma}

The proof of this Lemma is a simple binary search. The details can be found in Appendix \ref{appproof}.
In the next Lemma we explain how to reconstruct a word $w$ from its projections on two disjoint
complementary alphabets. Note that \cite[Lemma 14]{Skiena_Sundaram1995JCB}, is almost the same result with a number of queries $2.18|\pi_{\mathcal{B}}(w)|+|\pi_{\mathcal{C}}(w)|+5$ instead of $|\pi_{\mathcal{B}}(w)|+|\pi_{\mathcal{C}}(w)|+1$. The main difference is that instead of using a binary search we simply go greedily from left to right when combining the two words. This lemma almost exclusively explains the improvement we obtain over \cite[Theorem 2]{Skiena_Sundaram1995JCB}.

\begin{lemma}\label{L_merge}
Let $w$ be an unknown word of length $n$ over an alphabet $\mathcal{A}$.
Let $\mathcal{B}$ and $\mathcal{C}$ be two disjoint alphabets such that $\mathcal{A}=\mathcal{B}\cup\mathcal{C}$, then
\begin{enumerate}
\item if we know both projections $\pi_{\mathcal{B}}(w)$ and $\pi_{\mathcal{C}}(w)$, then the word $w$ can be reconstructed using at most $n-1$ $\exists$-subword queries,

\item if we know the word $\pi_{\mathcal{B}}(w)$ and $\#\mathcal{C}=1$,
then the word $w$ can be reconstructed using at most $n+1$ $\exists$-subword queries.
\end{enumerate}
\end{lemma}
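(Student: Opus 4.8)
The plan is to reconstruct $w$ by scanning its positions from left to right, using the known projections to predict which letter comes next and confirming the prediction with a single query. Let me set up notation: write $w = w_1 w_2 \cdots w_n$, and suppose we have already correctly determined the prefix $w_1 \cdots w_{i-1}$. We know how many letters of this prefix lie in $\mathcal{B}$ (say $b$ of them) and how many lie in $\mathcal{C}$ (say $c$ of them), so from $\pi_{\mathcal{B}}(w)$ we know that the next $\mathcal{B}$-letter to appear (if any) is the $(b+1)$-st letter of $\pi_{\mathcal{B}}(w)$, call it $x \in \mathcal{B}$, and similarly the next $\mathcal{C}$-letter is the $(c+1)$-st letter of $\pi_{\mathcal{C}}(w)$, call it $y \in \mathcal{C}$. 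The letter $w_i$ is therefore forced to be either $x$ or $y$ (it is one of the two, since every letter of $w$ lies in exactly one of the disjoint alphabets $\mathcal{B}, \mathcal{C}$). So at each position we need only resolve a binary choice.

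For part~(1), the key observation is how to resolve this binary choice with one $\exists$-subword query. Having fixed the prefix $p = w_1 \cdots w_{i-1}$, I would query whether the word $p\,x$ is a subword of $w$ under the constraint that forces it to be a \emph{prefix}. More precisely, the clean way is this: the occurrences of $x$ and $y$ in $w$ interleave in a fixed order determined by $w$ itself, and asking ``is $p x$ a subword of $w$?'' is not quite enough because $px$ could occur as a scattered subword even when $w_i = y$. The correct query is to test a word that distinguishes the two orderings of the next $\mathcal{B}$-letter relative to the next $\mathcal{C}$-letter. Concretely I would ask whether $\pi_{\mathcal{B}}(w)[1..b]\,y$ together with one more $\mathcal{B}$-letter, i.e.\ the prefix followed by $y$ then $x$, versus followed by $x$ then $y$, sits as a subword — but since each of $x,y$ is the unique next available letter of its sub-alphabet, the question reduces to: does $y$ occur before the $(b+1)$-st occurrence of the $\mathcal{B}$-pattern? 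I would phrase the single query as testing membership of the subword that encodes ``$y$ precedes $x$'' and read off $w_i$ from the yes/no answer. Crucially, once $w_i$ is determined for positions $i = 1, \ldots, n-1$, the final letter $w_n$ is forced without any query (it is whichever of $x, y$ remains), giving the bound of $n-1$ queries.

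For part~(2), we do not know $\pi_{\mathcal{C}}(w)$ directly; we only know that $\mathcal{C} = \{y\}$ is a single letter. So I would first spend queries to determine $|w|_y$, the number of occurrences of $y$, using the one-letter routine of Lemma~\ref{L_exists_one_letter}; but here I would instead fold this cost into the scan. The simpler accounting: since $\#\mathcal{C} = 1$, the word $\pi_{\mathcal{C}}(w) = y^{|w|_y}$ is determined by the single number $|w|_y$, and once we know $n$ and $|\pi_{\mathcal{B}}(w)|$ we get $|w|_y = n - |\pi_{\mathcal{B}}(w)|$ for free — but $n$ is not assumed known. I would therefore run the same left-to-right scan as in part~(1), and at each position where the remaining $\mathcal{C}$-content is unknown, the binary choice is between placing the next $\mathcal{B}$-letter $x$ or placing a $y$; one query per position again suffices. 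The two extra queries over the $n-1$ bound of part~(1) come from not knowing in advance where the string of $y$'s terminates: we need one query to detect when the $\mathcal{B}$-letters are exhausted and one to confirm the trailing block of $y$'s (equivalently, to pin down $|w|_y$ at the boundary), yielding the $n+1$ bound.

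The main obstacle I anticipate is making the single-query-per-position argument fully rigorous for scattered subwords rather than factors: because subword occurrences are non-contiguous, a naive query ``is $p\alpha$ a subword?'' can return yes even when $w_i \neq \alpha$, since the $\alpha$ may occur much later. The delicate point is to design, at each step, a query word whose membership as a subword is equivalent to the \emph{next} letter being $x$ rather than $y$, exploiting that $x$ and $y$ are each the earliest not-yet-placed letter of their respective projection. I expect the resolution to hinge on querying a word of the form (known prefix)$\cdot y \cdot$(the rest of $\pi_{\mathcal{B}}$ after $x$), so that a ``yes'' answer certifies $y$ appears before $x$ and hence $w_i = y$; verifying that this equivalence is exact, and that the boundary cases (one sub-alphabet exhausted) are handled within the stated query budget, is where the real care is needed.
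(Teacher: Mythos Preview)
Your overall strategy---scan left to right, resolving at each step a binary choice between the next available $\mathcal{B}$-letter $x$ and the next available $\mathcal{C}$-letter $y$ with one query---is exactly the paper's approach. You also correctly diagnose why the naive query ``is $p\alpha$ a subword?'' fails.

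However, the specific query you settle on in the last paragraph is wrong. You propose asking whether $p\,y\,(\text{rest of }\pi_{\mathcal{B}}\text{ after }x)$ is a subword, i.e.\ $p\,y\,u[b{+}2..]$. This does not distinguish the two cases. Take $w=xyx$ with $\mathcal{B}=\{x\}$, $\mathcal{C}=\{y\}$: at the first step your query is $yx$, which \emph{is} a subword of $xyx$, so you would wrongly conclude $w_1=y$. The fix is to keep $x$ in the tail: query $p\,y\,u[b{+}1..]$ (equivalently, by symmetry, $p\,x\,v[c{+}1..]$, which is what the paper uses). The point is that a ``yes'' must force the \emph{entire} remaining $\mathcal{B}$-projection to fit after the $y$, which is impossible when $w_i=x$ because one $\mathcal{B}$-letter has already been consumed.

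For part~(2) your accounting is vague. The paper's version is clean: with $\mathcal{C}=\{a\}$, at each step ask whether $p\,a\,u[i{+}1..]$ is a subword; if yes append $a$, if no append $u_{i+1}$. After $u$ is exhausted, keep appending $a$ while $pa$ remains a subword. This costs exactly one query per letter of $w$ plus one final ``no'' to detect termination, hence $n+1$. Your two-extra-queries explanation (``one to detect when the $\mathcal{B}$-letters are exhausted and one to confirm the trailing block'') does not match this count and is not quite right: exhaustion of $u$ is detected for free by the counter $i$, and the single extra query is the final negative answer at the end of the trailing $a$'s.
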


It may be observed that in item 1 of Lemma~\ref{L_merge},
the length of $w$ can be determined without asking any query since it is equal to
$|\pi_{\mathcal{B}}(w)|+|\pi_{\mathcal{C}}(w)|$.
This is not the case in item 2. In both cases, the length is not directly used in the proof.

For any word $x=x_1\cdots x_\ell\in \{0,1\}^\ell$  and integers $i,j\in\{1,\ldots, \ell\}$, let $x[i\ldots j]=x_ix_{i+1}\cdots x_j$ when $i\le j$. By extension, if $i> j$ (and possibly $i=|x|+1$ or $j=0$), then $x[i\ldots j]$ is the empty word.
\begin{proof}[Proof of Lemma~\ref{L_merge}]
Assume first that  $u=\pi_{\mathcal{B}}(w)$ and $v=\pi_{\mathcal{C}}(w)$ are known.
The first letter of $w$ is either $u_1$ or $v_1$. More precisely, $u_1v$ is a subword of $w$ if and only if $u_1$ is the first letter of $w$, otherwise $v_1$ is the first letter of $w$. Thus in one question we can determine the first letter of $w$, and the projections  $\pi_{\mathcal{B}}(w[2\ldots n])$ and $\pi_{\mathcal{C}}(w[2\ldots n])$. We can repeat this process and after each new query we obtain the next letter of $w$ and the two projections of the rest of $w$ over $\mathcal{B}$ and $\mathcal{C}$.
\begin{algorithm}
\caption{An algorithm that returns an unknown word $w$ over $\mathcal{B}\cup\mathcal{C}$ with $\mathcal{B} \cap \mathcal{C}=\emptyset$ from
$u = \pi_{\mathcal{B}}(w)$ and
$v= \pi_{\mathcal{C}}(w)$\label{A_merge}}
\begin{algorithmic}
\State $p \gets \varepsilon$ ; $i \gets 0$ ; $j \gets 0$
\While{$i < |u|$ \textbf{and} $j < |v|$}
    \If{$p u_{i+1} v[j+1..|v|]$ is a subword of $w$}
        \State $p \gets  pu_{i+1}$ ; $i \gets i+1$
    \Else
        \State $p \gets  pv_{j+1}$ ; $j \gets j+1$
    \EndIf
\EndWhile
\State  $p \gets p u[i+1..|u|] v[j+1..|v|]$
\State \Return  $p$
\end{algorithmic}
\end{algorithm}

Hence Algorithm~\ref{A_merge} allows to reconstruct $w$ from
$u$ and $v$. In this algorithm $i$ and $j$ store respectively the successive length of $\pi_{\mathcal{B}}(w[1\ldots i+j])$ and $\pi_{\mathcal{C}}(w[1\ldots i+j])$: at the beginning of each while loop, we know $p= w[1\ldots i+j]$.

From the preliminary comments, it is straightforward that at the end of the algorithm $p = w$ and that the number of $\exists$-subword queries asked is at most $n-1$.

From now on assume that we only know the word $\pi_{\mathcal{B}}(w)$ and the fact that $\mathcal{C}=\{a\}$ for some letter $a$.
We use a strategy similar to the previous case, that is,
we try to insert occurrences of the letter $a$ between the letters of $\pi_{\mathcal{B}}(w)$ in a greedy way.
Once the places of all letters of $\pi_{\mathcal{B}}(w)$ are known,
one has to determine the remaining occurrences of $a$ at the end of $w$.
This leads to the variant Algorithm~\ref{A_merge2}
for which the number of $\exists$-subword queries asked is exactly $n+1$: there is one query by letter of $\pi_{\mathcal{B}}(w)$ and $\pi_{\mathcal{C}}(w)$ and one additional query needed to determine when there is no more letter in $\pi_{\mathcal{C}}(w)$.
\begin{algorithm}
\caption{An algorithm that returns an unknown word $w$ over $\mathcal{B} \cup \{a\}$ with $a \not\in \mathcal{B}$ from $u = \pi_{\mathcal{B}}(w)$\label{A_merge2}}
\begin{algorithmic}
\State $p \gets \varepsilon$ ; $i \gets 0$
\While{$i < |u|$}
    \If{$p a u[i+1..|u|]$ is a subword of $w$}
        \State $p \gets  pa$
    \Else
        \State $p \gets  pu_{i+1}$ ; $i \gets i+1$
    \EndIf
\EndWhile
\While{$pa$ is a subword of $w$}
    \State  $p \gets pa$
\EndWhile
\State \Return  $p$
\end{algorithmic}
\end{algorithm}
\end{proof}

The proof of the next result explains the strategy to solve the reconstruction problem
using $\exists$-subword queries.
The length of $w$ may be unknown.

\begin{proposition}
\label{P_technical_exists_subwords}
Let $w$ be an unknown word over an alphabet of cardinality $k$.
For any $\mathcal{B}\subseteq\mathcal{A}$ with $\#\mathcal{B} \ge2$, the number of $\exists$-subword queries needed to reconstruct the word $\pi_{\mathcal{B}}(w)$ is at most
$$\lceil \log_2(\#\mathcal{B}) \rceil  |\pi_{\mathcal{B}}(w)|+\#\mathcal{B} \left(2+\max\limits_{\alpha \in \mathcal{B}} \lfloor\log_2(|w|_{\alpha}+1) \rfloor\right)\,.$$
\end{proposition}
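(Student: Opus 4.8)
The plan is to prove the bound by strong induction on $b := \#\mathcal{B}$, reducing the reconstruction of $\pi_{\mathcal{B}}(w)$ to that of two projections on a partition of $\mathcal{B}$. First I would observe that a word $u$ over $\mathcal{B}$ is a subword of $\pi_{\mathcal{B}}(w)$ if and only if it is a subword of $w$, so every $\exists$-subword query about $\pi_{\mathcal{B}}(w)$ is realizable as a query about $w$; I may thus treat $\pi_{\mathcal{B}}(w)$ as the ambient word. Throughout, write $N_{\mathcal{B}} = |\pi_{\mathcal{B}}(w)|$, $M_{\mathcal{B}} = \max_{\alpha \in \mathcal{B}} \lfloor \log_2(|w|_\alpha + 1)\rfloor$, and let $Q(\mathcal{B})$ denote the number of queries used, so the target reads $Q(\mathcal{B}) \le \lceil \log_2 b\rceil N_{\mathcal{B}} + b(2 + M_{\mathcal{B}})$.

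For the base case $b=2$, say $\mathcal{B}=\{\alpha,\beta\}$, I would first determine $|w|_\alpha$ using the unknown-length part of Lemma~\ref{L_exists_one_letter}, at cost at most $2\lfloor 1 + \log_2(|w|_\alpha+1)\rfloor = 2 + 2\lfloor \log_2(|w|_\alpha+1)\rfloor$, yielding $\alpha^{|w|_\alpha}=\pi_{\{\alpha\}}(\pi_{\mathcal{B}}(w))$. Since $\{\beta\}$ is a singleton, item~2 of Lemma~\ref{L_merge} then reconstructs $\pi_{\mathcal{B}}(w)$ from $\alpha^{|w|_\alpha}$ in at most $N_{\mathcal{B}}+1$ queries \emph{without} needing $|w|_\beta$ beforehand; this is the key device that avoids a second binary search. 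Summing gives at most $N_{\mathcal{B}} + 3 + 2\lfloor\log_2(|w|_\alpha+1)\rfloor \le N_{\mathcal{B}} + 3 + 2M_{\mathcal{B}}$, below the target $N_{\mathcal{B}} + 4 + 2M_{\mathcal{B}}$. The case $b=3$ is handled likewise: reconstruct a two-letter sub-projection by the $b=2$ procedure, then splice in the third letter via item~2 of Lemma~\ref{L_merge} at cost $N_{\mathcal{B}}+1$; this totals at most $2N_{\mathcal{B}}+4+2M_{\mathcal{B}}$, within the target $2N_{\mathcal{B}}+6+3M_{\mathcal{B}}$ as $\lceil\log_2 3\rceil = 2$.

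For the inductive step $b\ge 4$, I would partition $\mathcal{B}=\mathcal{B}_1\sqcup\mathcal{B}_2$ with $\#\mathcal{B}_1=\lceil b/2\rceil$ and $\#\mathcal{B}_2=\lfloor b/2\rfloor$, both at least $2$, reconstruct $\pi_{\mathcal{B}_1}(w)$ and $\pi_{\mathcal{B}_2}(w)$ by the induction hypothesis, and then recover $\pi_{\mathcal{B}}(w)$ from these two fully known projections by item~1 of Lemma~\ref{L_merge}, at cost $N_{\mathcal{B}}-1$. The verification rests on two elementary facts: the identity $\lceil \log_2\lceil b/2\rceil\rceil = \lceil \log_2 b\rceil - 1$ for $b\ge 2$ (obtained by locating $b$ between consecutive powers of two), together with $N_{\mathcal{B}_1}+N_{\mathcal{B}_2}=N_{\mathcal{B}}$ and $M_{\mathcal{B}_i}\le M_{\mathcal{B}}$. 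Combining the two inductive bounds with the merge cost gives
\[
Q(\mathcal{B}) \le (\lceil \log_2 b\rceil - 1)N_{\mathcal{B}} + b(2 + M_{\mathcal{B}}) + (N_{\mathcal{B}} - 1) = \lceil \log_2 b\rceil N_{\mathcal{B}} + b(2 + M_{\mathcal{B}}) - 1,
\]
which lies within the target.

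The main obstacle is keeping the additive logarithmic contribution at $M_{\mathcal{B}}$ rather than $2M_{\mathcal{B}}$ per letter: a naive recursion all the way down to singletons would invoke the unknown-length search of Lemma~\ref{L_exists_one_letter} (cost $\approx 2M$) at each of the $b$ leaves, overshooting by a factor of two. Stopping the recursion at two-letter alphabets and using item~2 of Lemma~\ref{L_merge} — which inserts a fresh letter with no preliminary count — is what confines each leaf's logarithmic cost to a single binary search, while the logarithm-identity bookkeeping is what makes the per-level merge cost telescope to exactly $\lceil\log_2 b\rceil N_{\mathcal{B}}$.
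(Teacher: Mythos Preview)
Your proof is correct and follows essentially the same approach as the paper: induction on $\#\mathcal{B}$ with base cases $b=2$ and $b=3$ handled via Lemma~\ref{L_exists_one_letter} followed by item~2 of Lemma~\ref{L_merge}, and the inductive step $b\ge 4$ via a balanced split and item~1 of Lemma~\ref{L_merge}, using the identity $\lceil\log_2\lceil b/2\rceil\rceil=\lceil\log_2 b\rceil-1$. Your final paragraph nicely articulates the reason for stopping the recursion at $b\in\{2,3\}$, which the paper leaves implicit.
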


\begin{proof}
We proceed by induction on the cardinality of $\mathcal{B}$ with the two base cases being $\#\mathcal{B}=2$ and $\#\mathcal{B}=3$.

If $\mathcal{B}=\{x,y\}\subseteq\mathcal{A}$ with $x\not=y$, we can apply Lemma \ref{L_exists_one_letter} to determine $\pi_{\{x\}}(w)= x^{|w|_x}$ in at most
$2 \lfloor 1+\log_2(|w|_x+1) \rfloor$ queries.
Case 2 of Lemma \ref{L_merge} implies that we can then determine $\pi_{\{x,y\}}(w)$ in at most $|\pi_{\{x,y\}}(w)|+1$ extra queries. The total number of queries is at most
$$|\pi_{\{x,y\}}(w)|+1+2 \lfloor 1+\log_2(|w|_x+1) \rfloor
\le\lceil \log_2(\#\mathcal{B}) \rceil  |\pi_{\mathcal{B}}(w)|+\#\mathcal{B} \left(2+\max\limits_{\alpha \in \mathcal{B}} \lfloor\log_2(|w|_{\alpha}+1) \rfloor\right)$$
as desired.

If  $\mathcal{B}=\{x,y,z\}$ for some distinct letters $x,y,z\in\mathcal{A}$, we use the strategy of the previous paragraph to determine $\pi_{\{x,y\}}(w)$ and we use case 2 of Lemma \ref{L_merge} once again to obtain $\pi_{\{x,y,z\}}(w)$ in at most $|\pi_{\{x,y,z\}}(w)|+1$ extra queries. The total number of queries is then at most
$$|\pi_{\{x,y\}}(w)|+|\pi_{\mathcal{B}}(w)|+2+2 \lfloor 1+\log_2(|w|_x+1) \rfloor
\le \lceil \log_2(\#\mathcal{B}) \rceil  |\pi_{\mathcal{B}}(w)|+\#\mathcal{B} \left(2+\max\limits_{\alpha \in \mathcal{B}} \lfloor\log_2(|w|_{\alpha}+1) \rfloor\right)$$
as desired.

We now have to deal with the induction. Assume $\#\mathcal{B}\ge4$. Let $\mathcal{C},\mathcal{C}'\subseteq\mathcal{B}$ be two disjoint alphabets such that $\mathcal{B}=\mathcal{C}\cup\mathcal{C}'$, $\#\mathcal{C}= \lfloor\frac{\#\mathcal{B}}{2}\rfloor$ and $\#\mathcal{C}'= \lceil\frac{\#\mathcal{B}}{2}\rceil$. The two last conditions imply
$$\lceil\log_2\#\mathcal{C}\rceil \leq \lceil\log_2\#\mathcal{C}'\rceil=\lceil\log_2\#\mathcal{B}\rceil-1\,.$$
By induction hypothesis, the number of queries to determine $\pi_{\mathcal{C}}(w)$ and $\pi_{\mathcal{C}'}(w)$ is at most
\begin{align*}
    &\lceil \log_2(\#\mathcal{C}) \rceil  |\pi_{\mathcal{C}}(w)|+\#\mathcal{C} \left(2+\max\limits_{\alpha \in \mathcal{C}} \lfloor\log_2(|w|_{\alpha}+1) \rfloor\right)\\
    +&
    \lceil \log_2(\#\mathcal{C}') \rceil  |\pi_{\mathcal{C}'}(w)|+\#\mathcal{C}' \left(2+\max\limits_{\alpha \in \mathcal{C}'} \lfloor\log_2(|w|_{\alpha}+1) \rfloor\right)\\
    &\le(\lceil \log_2(\#\mathcal{B}) \rceil-1)(|\pi_{\mathcal{C}}(w)|+|\pi_{\mathcal{C}'}(w)|)+(\#\mathcal{C}' +\#\mathcal{C})\left(2+\max\limits_{\alpha \in \mathcal{C}'\cup\mathcal{C} } \lfloor\log_2(|w|_{\alpha}+1) \rfloor\right)\\
    &\le(\lceil \log_2(\#\mathcal{B}) \rceil-1)(|\pi_{\mathcal{B}}(w)|)+\#\mathcal{B}\left(2+\max\limits_{\alpha \in \mathcal{B} } \lfloor\log_2(|w|_{\alpha}+1) \rfloor\right)\,.
\end{align*}
By case 1 of Lemma \ref{L_merge}, we  only need $|\pi_{\mathcal{B}}(w)|$ extra queries to determine $\pi_{\mathcal{B}}(w)$. In total, we used at most
$\lceil \log_2(\#\mathcal{B}) \rceil(|\pi_{\mathcal{B}}(w)|)+\#\mathcal{B}\left(2+\max\limits_{\alpha \in \mathcal{B} } \lfloor\log_2(|w|_{\alpha}+1) \rfloor\right)$ queries as required.
\end{proof}

\begin{proof}[Proof of Theorem~\ref{T_existssubwordkletters}]
Theorem~\ref{T_existssubwordkletters} is an immediate consequence of Proposition~\ref{P_technical_exists_subwords} taking $\mathcal{B}=\mathcal{A}$ and using $\max\limits_{\alpha \in \mathcal{B}} \lfloor\log_2(|w|_{\alpha}+1) \rfloor\le \lfloor\log_2(|w|+1)\rfloor$
\end{proof}

\section{Exists-factor queries\label{Sec:exists-factor_bis}}

In this section, we focus on queries in the form {\oq}Is $u$ a factor of $w$?{\cq}. Our aim is to prove Theorem 16. As for the result from [10] that we improve here, we assume in this section that the length of the word to determine is known.

A factor $u$ is said \textit{right-extendable} in a word $w$
if there exists a letter $a$ such that $ua$ is also a factor of $w$.
The word $ua$ is a \textit{right extension} of $u$.
A non-right-extendable factor $u$ of $w$ is a suffix of $w$
but the converse does not hold.
For instance the word $u = a$ is a suffix of the word $w =aa$
but it is right-extendable.
Actually it can be straightforwardly checked that
a factor $u$ is not right-extendable in $w$
if and only if $u$ is a suffix of $w$
which has only one occurrence as a factor of $w$.
The notions of left-extendability and  left  extensions are defined similarly.

The global strategy to reconstruct an unknown word $w$
using queries on factors is to apply the following three steps.
First we find a long  block of a fixed letter $\alpha$ (proof of Lemma~\ref{L_search_a^t_bis}).
Second we determine a non-right-extendable factor of $w$ having this long block of $\alpha$ as a prefix.
Two different approaches are developed in the proof of Lemmas~\ref{L_right_extendability_binary_bis} and \ref{L_right_extendability_binary_sqrt_bis}.
Depending on the length of the previously found long block of $\alpha$,  one or the other of the two approaches reveals to be more efficient.
Finally we determine $w$ from the previous non-right-extendable factor (Lemma~\ref{L_left_extendability_binary_bis}).
Let us first explain this last step.

\begin{lemma}
\label{L_left_extendability_binary_bis}
Let $w$ be an unknown word
of known length $n$ over an alphabet of cardinality $k$.
If we know a non-right-extendable factor $s$ of $w$
then we can reconstruct $w$ with at most $(k-1)(n-|s|)$  $\exists$-factor queries.
\end{lemma}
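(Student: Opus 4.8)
The plan is to reconstruct $w$ from right to left, one letter at a time, starting from the known suffix $s$. First I would record the structural fact, already noted in the text, that a non-right-extendable factor $s$ is a suffix of $w$ having only one occurrence as a factor; hence $w = ps$ for a unique prefix $p$ of length $n-|s|$, and the whole task reduces to identifying the $n-|s|$ letters of $p$.

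The core of the argument is an iteration maintaining the invariant that the current word $t$ (initially $t=s$) is a suffix of $w$ that occurs uniquely as a factor. Under this invariant, if $t\neq w$ then there is exactly one letter $a\in\mathcal{A}$ such that $at$ is a factor of $w$, namely the letter immediately preceding the unique occurrence of $t$: any occurrence of $at$ must align its suffix $t$ with the unique occurrence of $t$, which forces $a$ to be that preceding letter, and conversely that letter does yield an occurrence of $at$. Moreover the extended word $at$ again occurs uniquely in $w$ (since $t$ does) and is again a suffix, so the invariant is preserved and the iteration continues.

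To find the preceding letter at each step I would query ``Is $\beta t$ a factor of $w$?'' for the letters $\beta$ of $\mathcal{A}$ one by one; by the uniqueness just established, exactly one answer is positive. Testing $k-1$ of the $k$ letters suffices, since if all $k-1$ queries are negative the remaining letter must be the correct one. Thus each of the $n-|s|$ letters of $p$ costs at most $k-1$ queries, and because $n$ is known the process stops exactly after $n-|s|$ letters have been prepended, requiring no extra termination query. Summing over the $n-|s|$ steps gives the bound $(k-1)(n-|s|)$.

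The main point to get right is the uniqueness claim underlying the invariant — that a uniquely-occurring suffix admits a single possible one-letter left extension into a factor — which is exactly what makes the per-step cost $k-1$ rather than $k$. The remaining verifications are the boundary cases: $|s|=n$, which gives $w=s$ and $0$ queries, and $k=1$, where the word is $a^n$ and is determined by $n$ alone, both consistent with the claimed bound.
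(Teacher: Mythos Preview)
Your proposal is correct and follows essentially the same approach as the paper: extend the known suffix one letter to the left at a time, spending at most $k-1$ queries per step by skipping one fixed letter, and stop once length $n$ is reached. Your write-up is in fact more explicit than the paper's about the invariant (unique occurrence of the current suffix) that justifies inferring the missing letter after $k-1$ negative answers.
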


\begin{proof}
Assume that $|s|<n$. Then $s$ is a proper suffix of $w$.
Fix a letter $\alpha$.
We can ask {\oq}is $\beta s$ a factor of $w$?{\cq} for each letter $\beta$ different from $\alpha$.
 If the answer is positive for some $\beta$ then we know that $\beta s$ is a non-right-extendable factor of $w$
 and if the answer is negative for all $\beta$ then we know that $\alpha s$ is a non-right-extendable factor of $w$.
 We then repeat the same process until
 we reach a word of length $n$ (this word necessarily is $w$).
 It costs us at most $k-1$ queries by letter that we have to determine, that is, $(k-1)(n-|s|)$ queries.
\end{proof}

We now explain how to efficiently find a non-right-extendable factor of $w$.
For this a letter $\alpha$ is fixed and
we assume that we know the greatest $t$ such that $\alpha^t$ occurs as a factor in $w$.
And we will present two different strategies that we will use for different values of $t$ in the proof of Theorem \ref{T_exfac2_bis}.
The first strategy will be used when $t$ is not too large.
It is described in the proof of the following result.

\begin{lemma}
\label{L_right_extendability_binary_bis}
Let $w$ be an unknown word of known length $n$ over an alphabet ${\cal A}$ of cardinality $k$.  Let $\alpha \in {\cal A}$.
If we  know the largest integer $t$ such that $\alpha^t$ is a factor of $w$, then
a non-right-extendable factor $s$ of $w$ can be determined with at most $(k-1)(|s|+2)$  $\exists$-factor queries.
\end{lemma}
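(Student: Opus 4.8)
The plan is to grow a factor of $w$ rightward, starting from the factor $\alpha^t$, while spending only about $k-1$ queries per produced letter. At a generic step I hold a factor $u$ (initially $u=\alpha^t$) and I ask ``is $u\beta$ a factor of $w$?'' for each of the $k-1$ letters $\beta\neq\alpha$, stopping as soon as an answer is positive. If some $\beta$ works I append it and set $u\gets u\beta$. If all $k-1$ answers are negative, the only possible right extension of $u$ is by $\alpha$, and here the knowledge of $t$ enters: if the trailing run of $\alpha$'s in $u$ already has length $t$ then $u\alpha$ cannot be a factor (it would contain $\alpha^{t+1}$), so $u$ is non-right-extendable and I stop; otherwise I append $\alpha$ \emph{without spending any query}, charging this $\alpha$ to the $k-1$ negative answers just paid for. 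In this way every appended letter, and the final stopping step, costs at most $k-1$ queries.

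The correctness issue, which I expect to be the main obstacle, is that appending $\alpha$ for free is only legitimate while $u$ is a proper prefix of the factor we are heading towards: such a $u$ is right-extendable, so if no $\beta\neq\alpha$ extends it then $\alpha$ must, and forcing $\alpha$ is correct. The danger is the first non-right-extendable factor $s$ we reach whose trailing $\alpha$-run has some length $j_0<t$: there all queries fail yet we wrongly append $\alpha$. I would show that this \emph{overshoot} is harmless and, crucially, cheap. Once $u$ ceases to be a factor every non-$\alpha$ query keeps failing, so we can only append further $\alpha$'s, and we stop exactly when the trailing run reaches $t$. Hence the word actually produced is $s'=s\alpha^{\,t-j_0}$, that is $s$ followed by a single block of surplus $\alpha$'s, and nothing else is corrupted.

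It then remains to recover $s$ from $s'$ and to certify it. Writing $s'=q\alpha^{t}$ with $q$ the prefix preceding the final $\alpha$-run (so $s=q\alpha^{j_0}$), I would locate $j_0$ by testing $q\alpha,q\alpha^{2},\dots$ in increasing order until the first non-factor appears; since $q\alpha^{i}$ is a factor exactly for $i\le j_0$, this uses at most $j_0+1$ queries, and it simultaneously certifies that $s=q\alpha^{j_0}$ is non-right-extendable: its non-$\alpha$ extensions already failed during the growth phase, and $q\alpha^{j_0+1}=s\alpha$ is now seen not to be a factor.

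Finally I would add up the two phases. The growth phase performs $|s'|-t+1=|s|-j_0+1$ steps, each of cost at most $k-1$, and the recovery costs at most $j_0+1$, so the total number of queries is at most
\[
(k-1)(|s|-j_0+1)+(j_0+1)=(k-1)(|s|+2)-(k-2)(j_0+1)\le (k-1)(|s|+2),
\]
the last inequality holding because $k\ge 2$ and $j_0\ge 0$. The only points needing extra care are the degenerate cases $t=0$ (then $\alpha$ does not occur, $q=s'$, and no overshoot happens) and $s'=\alpha^{t}$ (the seed itself is non-right-extendable), both of which fit the same count.
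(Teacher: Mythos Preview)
Your proposal is correct and follows essentially the same approach as the paper: grow a factor from $\alpha^t$ by testing the $k-1$ non-$\alpha$ extensions, speculatively append $\alpha$ when they all fail, detect the overshoot once the trailing $\alpha$-run reaches length $t$, and then linearly search $q\alpha,q\alpha^2,\dots$ to recover the true non-right-extendable factor. Your bookkeeping $(k-1)(|s|-j_0+1)+(j_0+1)=(k-1)(|s|+2)-(k-2)(j_0+1)$ is the same count the paper obtains (with your $q,j_0$ playing the role of the paper's $v,r$).
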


\begin{proof}
Let $\sigma$ be a variable that aims to contain the searched non-right-extendable factor of $w$.
We initialize $\sigma$ with the word $\alpha^t$.
We search for successive right extensions of $\sigma$
asking the query {\oq}is $\sigma\beta$ a factor of $w$?{\cq} for each letter $\beta\not=\alpha$.
If the answer is {\oq}yes{\cq} for some $\beta\not=\alpha$ then we know that $\sigma\beta$ is a factor of $w$
and we set $\sigma\beta$ to be the new value of $\sigma$.

If the answer is {\oq}no{\cq} for all $\beta\not=\alpha$, then either $\sigma\alpha$ is a factor of $w$ or $\sigma$ is non-right-extendable.
If $\sigma$ does not end with the suffix $\alpha^t$, we set $\sigma\alpha$ to be the new value of $\sigma$.
It is possible that $\sigma$ is no longer a factor of $w$ (and so $\sigma$ is not a non-right-extendable factor of $w$), in particular, when the previous value of $\sigma$ already was the searched non-right-extendable factor of $w$.
But if later, while trying to add a letter $\beta \neq \alpha$, we get {\oq}yes{\cq} as an answer
we deduce that we were right for every previous assumption.
If we obtain the answer {\oq}no{\cq} $t+1$ consecutive times
then we have added  $t+1$ occurrences of $\alpha$ at the end of $\sigma$.
This implies that we were wrong since by definition of $t$,
$\alpha^{t+1}$ is not a factor of $w$.
At this point $\sigma=v\alpha^{t+1}$ for some word $v$ that ends with a letter different from $\alpha$
and there exists $r\le t$
such that $v\alpha^{r}$ is a suffix of $w$
and both $v\alpha^{r+1}$ and all words $v\alpha^r\beta$ with $\beta \neq \alpha$ are not factors of $w$:
$v\alpha^r$ is the searched non-right-extendable factor of $w$.
We can determine $r$ by asking {\oq}is $v\alpha^{r+1}$ a factor of $w$?{\cq} from $r = 0$ and until a negative answer.

Let us now provide an upper-bound for the number of queries.
Let $v\alpha^{t+1}$ be the value of $\sigma$ obtained after $t+1$ consecutive negative queries
and let $r+1$ be the number of additional queries asked to determine the final value $s$ of $\sigma$.
Observe that $v$ was determined using $(k-1)(|v|-t)$ queries.
Then we use $(k-1)(t+1)$ queries to get $v\alpha^{t+1}$
and finally we use $r+1$ queries to determine the final value.
The total amount of queries is thus bounded by
$(k-1)((|v|-t)+(t+1)+(r+1))$.
Since $|s| = |v|+r$, this number of queries is bounded by $(k-1)(|s|+2)$.
\end{proof}

Let us illustrate in an example the strategy used in the proof of Lemma~\ref{L_right_extendability_binary_bis}.
Assume that the word to reconstruct is $w = 00011100111011$ and that we use $\alpha =1$.
We have $t = 3$ and initially $\sigma = 111$.
The answer to the two first queries are positive and we get $\sigma = 11100$.
Then the answers to the next three queries are negative and we assume that $\sigma = 11100111$ is a prefix of the expected result.
This is confirmed by the next query that sets $v = 111001110$. The next four negative queries on $v0$, $v10$, $v110$ and $v1110$ imply that
the  non-right-extendable factor is $v$, $v1$, $v11$, or $v111$.
After three additional queries, we know that $11100111011$ is a non-right-extendable factor (hence a suffix) of $w$.

If $t$ is large (essentially if $t \geq \lceil4 \sqrt{n}\,\rceil$; see the proof of Theorem~\ref{T_exfac2_bis}),
then a better strategy is to verify slightly more often that our assumptions are correct
when building the non-right-extendable factor. Doing so leads to the alternative strategy provided in the proof of the next result.
\begin{lemma}
\label{L_right_extendability_binary_sqrt_bis}
Let $w$ be an unknown word of known length $n$ over an alphabet ${\cal A}$ of cardinality $k$.  Let $\alpha \in {\cal A}$ be a letter with at least one occurrence in $w$.
Assume that we know $n$ and the largest positive integer $t$ such that $\alpha^t$ is a factor of $w$.
A non-right-extendable factor $s$ of $w$ can be determined using
at most $(k-1)(|s|-t)+k\left\lceil\sqrt{n}\,\right\rceil+1$  $\exists$-factor queries.
\end{lemma}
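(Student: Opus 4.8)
The plan is to keep the optimistic left-to-right construction used in the proof of Lemma~\ref{L_right_extendability_binary_bis}, and to replace only its expensive way of certifying that the current prefix has stopped being right-extendable. As there, I would start from $\sigma=\alpha^{t}$ (a known factor, obtained with no query) and repeatedly try to extend $\sigma$ on the right: for each $\beta\neq\alpha$ I ask whether $\sigma\beta$ is a factor of $w$; if some answer is positive I append the corresponding $\beta$, and if all $k-1$ answers are negative I append one more $\alpha$ without querying. A positive answer retroactively certifies that the whole current $\sigma$ is a factor, so every maximal $\alpha$-run that is followed by another letter of $s$ gets certified \emph{for free} by the query that discovers that next letter. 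This already yields the leading term: each of the $|s|-t$ letters of $s$ lying after the initial block $\alpha^{t}$ is paid for by at most $k-1$ queries, for a total of $(k-1)(|s|-t)$.

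The only runs never certified this way are the final $\alpha$-run of $s$ together with the overshoot produced past it. In Lemma~\ref{L_right_extendability_binary_bis} this overshoot runs all the way to $\alpha^{t+1}$ and costs about $(k-1)t$ queries; to avoid this I would verify more often. I maintain a counter $c$ of the consecutive optimistic $\alpha$-appends since the last certification, reset to $0$ after every successful non-$\alpha$ trial, and as soon as $c$ reaches $L:=\lceil\sqrt{n}\,\rceil$ I spend a single query on \oq is $\sigma$ a factor of $w$?\cq{} A \oq yes\cq{} certifies $\sigma$ and resets $c$; a \oq no\cq{} means I have just walked off the end of a non-right-extendable suffix inside the current block, so I stop and recover its exact length inside that block. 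Correctness is clear: the process only ever extends a factor or overshoots it by at most $L$ positions, and since it always takes the longest admissible $\alpha$-run it must terminate at a factor admitting no right extension at all, that is, a non-right-extendable factor (necessarily a suffix of $w$, as recalled before Lemma~\ref{L_left_extendability_binary_bis}).

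For the count I would argue as follows. Every \oq yes\cq{} verification is preceded by $L$ fresh optimistic $\alpha$-appends lying inside $s$, and there are at most $|s|-t\le n$ such appends in all, so there are at most $n/L\le L$ positive verifications, plus a single negative one. The overshoot past the true suffix spans at most $L$ positions, each charged at most $k-1$ for its non-$\alpha$ trials. Adding the $(k-1)L$ overshoot trials to the $L$ positive-verification queries gives exactly the $k\,L=k\lceil\sqrt{n}\,\rceil$ term, and the final negative verification contributes the $+1$. The saving of $t$ over Lemma~\ref{L_right_extendability_binary_bis}—turning $(k-1)(|s|+2)$ into $(k-1)(|s|-t)+\dots$—is precisely the deletion of its $\sim(k-1)t$ overshoot, now replaced by this $\sqrt{n}$-periodic verification.

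The step I expect to be delicate is recovering the exact final run once a verification fails, without spoiling this count. Because factor queries are \emph{global}, a positive \oq is $\sigma$ a factor?\cq{} does not certify the particular occurrence being tracked, and inside the last, uncertified block the negative non-$\alpha$ trials cannot by themselves separate the true end of the suffix from positions already beyond it. I would localise that end by a search confined to the block, guarded by the known facts that no $\alpha$-run exceeds $t$ and that $\alpha^{t+1}$ is never a factor; the subtle point is to keep this localisation within the slack of the stated bound. For $k\ge 3$ there is room to spare (balancing $(k-1)L$ against $n/L$ leaves a gap of order $(\sqrt{k-1}-1)^2\sqrt{n}$), which comfortably absorbs a logarithmic in-block search; the genuinely tight case is the binary one $k=2$, where matching $2\lceil\sqrt{n}\,\rceil+1$ on the nose seems to require localising the last run essentially for free rather than by a binary search.
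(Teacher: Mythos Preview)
Your strategy coincides with the paper's: start from $\sigma=\alpha^{t}$, extend greedily with $k-1$ non-$\alpha$ trials per letter, append $\alpha$ for free when all fail, and verify that $\sigma$ is still a factor after every $L=\lceil\sqrt{n}\,\rceil$ consecutive $\alpha$-appends. The one gap you flag---the final localisation---has a much simpler fix than you expect. When the verification on $v\alpha^{L}$ fails (with $v$ the last certified value of $\sigma$), the paper recovers the end by a \emph{linear} search: ask {\oq}is $v\alpha^{i}$ a factor of $w$?{\cq} for $i=1,2,\dots$ until the first {\oq}no{\cq}, which occurs at $i=r+1$ where $s=v\alpha^{r}$ is the non-right-extendable factor. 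This costs $r+1$ queries.

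The accounting trick you are missing is that your overshoot bound is loose by exactly the right amount. The overshoot past $s$ spans $L-r$ positions, not $L$, so the $(k-1)L$ you budget for overshoot trials overcounts by $(k-1)r\ge r$; that slack pays for the $r$ successful linear-search queries, and the failing query at $i=r+1$ is the $+1$ in the bound. Concretely,
\[
(k-1)(|s|-t)\;+\;(k-1)(L-r)\;+\;(\text{verifications})\;+\;(r+1)
\;=\;(k-1)(|s|-t)+kL-(k-2)r+1,
\]
which is at most $(k-1)(|s|-t)+kL+1$ once all verifications together are bounded by $L$. For that, sharpen your verification count by one: each positive verification certifies $L$ new letters of a factor of $w$ whose length starts at $t\ge 1$ and never exceeds $n$, so there are at most $\lfloor(n-1)/L\rfloor\le L-1$ positive verifications, plus the single negative one, for a total of $L$. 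Thus no binary search is needed, and even in the tight binary case $k=2$ the localisation fits exactly: the $r$ you lose in the search is precisely the $r$ you save on the overshoot.
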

\begin{proof}
The strategy is almost identical to the previous one.
We initialize $\sigma$ with the word $\alpha^t$ and we try to extend it by asking whether $\sigma \beta$ for some $\beta \neq \alpha$ is a factor of $w$
and we proceed as previously.

If we obtain the answer {\oq}no{\cq} $r$ consecutive times
then we added $r$ occurrences of $\alpha$ at the end of $s$.
In this case, every $\left\lceil\sqrt{n}\,\right\rceil$ new consecutive occurrences of $\alpha$,
we verify if our current value of $\sigma$ is a factor of $w$.
If this holds we keep going.
Otherwise letting $v$ be the word such that
$\sigma = v\alpha^{\left\lceil\sqrt{n}\,\right\rceil}$,
$v\alpha^{\left\lceil\sqrt{n}\,\right\rceil}$ is not a factor of $w$.
We need to find the largest $r$ such that $v \alpha^r$ is a factor of $w$.
This can be done by setting $\sigma=v$ and asking the query {\oq}is $\sigma \alpha^i$ a factor of $w$ ?{\cq}, where $i$ starts at $1$ and increases until we receive the answer {\oq}no{\cq}.  

Let us now count the number of queries.
In the first phase, until reaching $v\alpha^{\left\lceil\sqrt{n}\,\right\rceil}$, the length of $\sigma$ increases from $t$ to $|v\alpha^{\left\lceil\sqrt{n}\,\right\rceil}|$. Each new letter requires at most $k-1$ queries, but each $\left\lceil\sqrt{n}\,\right\rceil$ query a verification query is done.
So the number of queries 
in this first phase is at most (remember $t \geq 1$)
$$(k-1)(|v\alpha^{\left\lceil\sqrt{n}\,\right\rceil}|-t)+ \left\lfloor\frac{|v\alpha^{\left\lceil\sqrt{n}\,\right\rceil}| -t}{\lceil\sqrt{n}\rceil}\right\rfloor\le
(k-1)(|v\alpha^{\left\lceil\sqrt{n}\,\right\rceil}|-t)+ 1+\left\lfloor\frac{|w|-1}{\lceil\sqrt{n}\rceil}\right\rfloor$$
which is upper-bounded by $(k-1)(|v|-t)+k\lceil\sqrt{n}\rceil\,$.

In the second phase there is one verification query and every other query increases the value of $i$ from $1$ to $r+1$. So there are at most $1+r=1+|s|-|v| \leq 1 + (k-1)(|s|-|v|)$ other queries in this second phase.
Summing the queries of the first and second phase, we deduce that at most $(k-1)(|s|-t)+k\left\lceil\sqrt{n}\,\right\rceil+1$ queries are used.
\end{proof}

Before using Lemma~\ref{L_right_extendability_binary_bis} or Lemma~\ref{L_right_extendability_binary_sqrt_bis}
we need to determine the greatest power of a letter in a word $w$.
This can be done using a binary search with queries in the form {\oq}Is $a^t$ a factor of $w$?{\cq} for $1 \leq t \leq n$.
A negative answer to the query {\oq}Is $a^1$ a factor of $w$?{\cq} shows that the letter $a$ does not occur in $w$. The next result holds for arbitrary alphabets.
Its proof specifies how the binary search is done.

\begin{lemma}
\label{L_search_a^t_bis}
Let $w$ be an unknown word.
Let $a$ be a letter, $x, y$ be two known integers and $t$ be the largest integer such that  $a^t$ is a factor of $w$.
If we know that $x\le t\le y$ then at most $\lceil\log_2(y+1-x)\rceil$ $\exists$-factor queries are needed to determine the value of $t$.
\end{lemma}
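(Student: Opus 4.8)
The plan is to recognise that this is exactly a binary search, the only input-specific observation being the monotonicity of the factor relation on powers of a single letter. First I would record the key fact that, for every integer $i \ge 1$, the word $a^i$ is a factor of $w$ if and only if $i \le t$: this is immediate because any occurrence of $a^i$ contains an occurrence of $a^j$ for each $j \le i$, and $t$ is by definition the largest exponent realised. Consequently a single query ``Is $a^i$ a factor of $w$?'' returns ``yes'' exactly when $t \ge i$ and ``no'' exactly when $t \le i-1$; that is, each query is a comparison of $t$ against a chosen threshold $i$, which is precisely the primitive needed to binary-search the value of $t$.

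Second I would set up the search on the candidate set $\{x, x+1, \ldots, y\}$, which is nonempty since $x \le t \le y$. I maintain an interval $[x', y'] \subseteq [x,y]$ known to contain $t$, initialised to $[x,y]$. While $x' < y'$ I query $a^i$ for a balanced threshold, e.g. $i = \lfloor (x'+y'+1)/2 \rfloor$, which satisfies $x' < i \le y'$: a ``yes'' answer lets me replace $[x',y']$ by $[i, y']$, and a ``no'' answer lets me replace it by $[x', i-1]$. When $x' = y'$ I stop and output $t = x'$. Correctness is clear from the comparison interpretation of each query and from the fact that every update keeps $t$ inside the interval.

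The remaining work is the query count, and this is where the bound $\lceil\log_2(y+1-x)\rceil$ must be extracted carefully. Writing $N = y' - x' + 1$ for the current number of candidates, the balanced choice of $i$ guarantees that both possible successor intervals have size at most $\lceil N/2\rceil$: when $N$ is even both have size exactly $N/2$, and when $N$ is odd the larger has size $(N+1)/2$. Hence each query maps the candidate count from $N$ to at most $\lceil N/2\rceil$, and iterating the identity $\lceil \lceil N/2\rceil/2\rceil = \lceil N/4\rceil$ shows that after $q$ queries the count is at most $\lceil N_0/2^q\rceil$, where $N_0 = y+1-x$ is the initial count. The search terminates once the count reaches $1$, i.e. as soon as $\lceil N_0/2^q\rceil = 1$, which is equivalent to $N_0 \le 2^q$, i.e. to $q \ge \lceil \log_2 N_0\rceil$. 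Therefore the smallest number of queries that always suffices is $\lceil\log_2(y+1-x)\rceil$, including the degenerate case $N_0 = 1$ where $t = x$ is known with zero queries.

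I expect the main (and only real) difficulty to be bookkeeping rather than mathematics: choosing the threshold $i$ so that the worst-case halving genuinely produces the ceiling $\lceil N/2\rceil$ and not $\lceil N/2\rceil + 1$, and then matching the iterated ceiling exactly to $\lceil \log_2(y+1-x)\rceil$ rather than to a bound that is off by one. A clean way to avoid off-by-one slips is to phrase the invariant directly in terms of the candidate count $N$ and to invoke the nested-ceiling identity $\lceil \lceil N/a\rceil /b\rceil = \lceil N/(ab)\rceil$ for positive integers $a,b$, which collapses the per-step analysis into the single inequality giving the count after $q$ queries.
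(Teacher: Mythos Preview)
Your proposal is correct and matches the paper's own proof: both perform a binary search on the interval $\{x,\ldots,y\}$ using the monotonicity of $i\mapsto[a^i\text{ is a factor of }w]$, with the same midpoint $\lceil(x+y)/2\rceil=\lfloor(x+y+1)/2\rfloor$ and the same halving bound $\lceil N/2\rceil$ on the successor interval. The only cosmetic difference is that the paper phrases the query count as an induction on the interval size while you phrase it via the nested-ceiling identity, but the content is identical.
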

Once again the idea of this Lemma is to use a binary search and the details of the proof can be found in Appendix~\ref{annexproof2}. 

Applying successively Lemma~\ref{L_search_a^t_bis}, then Lemma~\ref{L_right_extendability_binary_bis} or Lemma~\ref{L_right_extendability_binary_sqrt_bis} and finally Lemma~\ref{L_left_extendability_binary_bis}, we get the next result.

\begin{theorem}\label{T_exfac2_bis}
An unknown nonempty word $w$ of known length $n$ over an alphabet of cardinality $k \geq 2$ can be reconstructed in at most $(k-1)(n+2) + \lceil\frac{\log_2n}{2}\rceil+3$  $\exists$-factor queries.
\end{theorem}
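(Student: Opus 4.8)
The plan is to fix an arbitrary letter $\alpha$ and to chain the four preceding lemmas: first locate the largest power $t$ of $\alpha$ occurring in $w$ (Lemma~\ref{L_search_a^t_bis}), then build a non-right-extendable factor $s$ having $\alpha^t$ as a prefix (Lemma~\ref{L_right_extendability_binary_bis} or Lemma~\ref{L_right_extendability_binary_sqrt_bis}), and finally grow $s$ leftwards into all of $w$ (Lemma~\ref{L_left_extendability_binary_bis}). The crucial bookkeeping observation is that the unknown length $|s|$ cancels against the final step: combining Lemma~\ref{L_right_extendability_binary_bis} with Lemma~\ref{L_left_extendability_binary_bis} costs $(k-1)(|s|+2)+(k-1)(n-|s|)=(k-1)(n+2)$, while combining Lemma~\ref{L_right_extendability_binary_sqrt_bis} with Lemma~\ref{L_left_extendability_binary_bis} costs $(k-1)(|s|-t)+k\lceil\sqrt n\rceil+1+(k-1)(n-|s|)=(k-1)(n-t)+k\lceil\sqrt n\rceil+1$. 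Hence, once $t$ is known, the reconstruction proper costs either $(k-1)(n+2)$ or $(k-1)(n-t)+k\lceil\sqrt n\rceil+1$, and we are free to use whichever is cheaper; the only remaining cost is that of finding $t$.

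The whole improvement therefore comes from finding $t$ in about $\tfrac12\log_2 n$ queries rather than $\log_2 n$. First I would set the threshold $T=\lceil 4\sqrt n\rceil$ and spend a single query asking whether $\alpha^T$ is a factor of $w$, which decides in which regime $t$ lies. If the answer is negative, then $t<T$, and Lemma~\ref{L_search_a^t_bis} applied on the range $\{0,\dots,T-1\}$ pins down $t$ using at most $\lceil\log_2 T\rceil$ queries, which is of the form $2+\lceil\tfrac{\log_2 n}{2}\rceil$. I would then reconstruct $w$ through Lemma~\ref{L_right_extendability_binary_bis} followed by Lemma~\ref{L_left_extendability_binary_bis} for a further $(k-1)(n+2)$ queries; adding the threshold query gives the announced bound $(k-1)(n+2)+\lceil\tfrac{\log_2 n}{2}\rceil+3$. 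The power $4=2^2$ in $T$ is what turns $\log_2 T$ into $\tfrac{\log_2 n}{2}+2$, and the final $+1$ is the threshold query.

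If instead $\alpha^T$ is a factor, then $t\ge T\ge 1$, and here I would afford the full binary search: Lemma~\ref{L_search_a^t_bis} on $\{T,\dots,n\}$ determines $t$ in at most $\lceil\log_2(n-T+1)\rceil\le\lceil\log_2 n\rceil$ queries, after which Lemma~\ref{L_right_extendability_binary_sqrt_bis} followed by Lemma~\ref{L_left_extendability_binary_bis} reconstructs $w$ in $(k-1)(n-t)+k\lceil\sqrt n\rceil+1$ further queries. The point is that $t\ge\lceil4\sqrt n\rceil$ makes the reconstruction cheaper than in the first regime by $(k-1)(t+2)-k\lceil\sqrt n\rceil-1$, a surplus of order $\sqrt n$ for every $k\ge2$; this surplus dwarfs the only extra expense of this regime, namely the gap $\lceil\log_2 n\rceil-\lceil\tfrac{\log_2 n}{2}\rceil$ between the two searches, so this regime too stays below $(k-1)(n+2)+\lceil\tfrac{\log_2 n}{2}\rceil+3$.

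I expect the delicate part to be purely the constant bookkeeping: one must choose $T$ large enough that the large-$t$ regime, which pays the costlier search over $\{T,\dots,n\}$ and the additive $k\lceil\sqrt n\rceil$ of Lemma~\ref{L_right_extendability_binary_sqrt_bis}, is covered by the linear saving $(k-1)t$, yet small enough that the small-$t$ search over $\{0,\dots,T-1\}$ still costs only $\lceil\tfrac{\log_2 n}{2}\rceil+O(1)$. The value $T=\lceil4\sqrt n\rceil$ balances these, the binding (small-$t$) case meeting the target exactly while the large-$t$ case has room to spare; verifying this through the interacting ceilings is where the care goes. Minor edge cases—very small $n$ (where $4\sqrt n$ may exceed $n$), and $t=0$ when $\alpha$ does not occur (handled directly by Lemma~\ref{L_right_extendability_binary_bis}, whose greedy extension degenerates correctly since $\alpha^{t+1}$ is then never a factor)—would be dispatched separately and do not affect the stated bound.
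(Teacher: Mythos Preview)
Your proposal is correct and follows essentially the same argument as the paper: a single threshold query at $T=\lceil 4\sqrt n\rceil$, then Lemma~\ref{L_search_a^t_bis} on the appropriate sub-range, followed by Lemma~\ref{L_right_extendability_binary_bis} or Lemma~\ref{L_right_extendability_binary_sqrt_bis} and finally Lemma~\ref{L_left_extendability_binary_bis}, with the small-$t$ regime being the binding case. The paper carries out the large-$t$ bookkeeping you flag as ``delicate'' explicitly, showing the total there is at most $(k-1)(n+3)+\lceil\log_2 n\rceil-(3k-4)\lceil\sqrt n\rceil+2\le(k-1)(n+2)$, which confirms your heuristic that the $\sqrt n$ surplus absorbs the costlier search.
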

\begin{proof}
  We start with the query {\oq}is $\alpha^{\lceil4\sqrt{n}\,\rceil}$ a factor of $w$?{\cq}.  
 
 If we obtain a positive answer, we use Lemma~\ref{L_search_a^t_bis} (with $x=\lceil4\sqrt{n}\,\rceil$ and $y=n$ ($n \geq 1$))
  to compute the largest $t$ such that $\alpha^t$ is a factor of $w$ in at most $\lceil\log_2 n\rceil$ queries.  Then we apply Lemma~\ref{L_right_extendability_binary_sqrt_bis} to find a non-right-extendable factor $s$
in at most $(k-1)(|s|-t) + k\lceil\sqrt{n}\,\rceil+1$ queries.
Since $t \geq \lceil4\sqrt{n}\,\rceil \geq 4\lceil\sqrt{n}\,\rceil-3$,
$$(k-1)(|s|-t) + k\lceil\sqrt{n}\,\rceil+1 \le (k-1)(|s|+3)-(3k-4)\lceil\sqrt{n}\rceil+1\,.$$
We finally apply Lemma~\ref{L_left_extendability_binary_bis} to find $w$ in $(k-1)(n-|s|)$ queries. In this case, including the initial query, we need a total of at most $(k-1)(n+3) + \lceil\log_2n\rceil - (3k-4) \lceil\sqrt{n}\,\rceil+2\le (k-1)(n+2)$ queries (we use $k \geq 2$ and $n\ge1$ for this inequality).  

  If we obtain a negative answer, we use Lemma~\ref{L_search_a^t_bis} (with $x=0$ and $y=\lceil4\sqrt{n}\,\rceil-1$)
  to compute the largest $t$ such that $\alpha^t$ is a factor of $w$ in at most $\lceil\log_2(4\sqrt{n})\rceil=
  \lceil\frac{\log_2n}{2}\rceil+2$ queries.
  Then we apply Lemma~\ref{L_right_extendability_binary_bis} to find a non-right-extendable factor $s$ in $(k-1)(|s|+2)$ queries and we finally apply Lemma~\ref{L_left_extendability_binary_bis} to find $w$ in $(k-1)(n-|s|)$ queries. In this case we need a total of $(k-1)(n+2) + \lceil\frac{\log_2n}{2}\rceil+3$ queries including the initial query.
  \end{proof}
 
\section{Conclusion}

We have studied three reconstruction problems and, for each of them, we have improved upper bounds on the number of necessary queries.
For reconstruction of a word $w$ of length $n$ over an alphabet of cardinality $k$ using $\exists$-subword queries, we have a lower bound $n \log_2(k)$ and in Section~\ref{Sec:exist_queries}, we reduce the gap between the lower and the upper bound to an $O(k \log_2(n))$. An open question is whether this gap can be further reduced to an $O(k)$ number of queries or even lower.

For the reconstruction using $\#$-subword queries as considered in Section~\ref{Sec:How many}, up to our knowledge, no lower bound is known. Our upper bound is much lower than the previous one, but it could still be far from the truth. In particular, we showed that there exists a deterministic algorithm that requires in average $O(\log n)$ queries to reconstruct a uniform random binary word of length $n$, but this algorithm requires $\Theta(\sqrt{n\log n})$ queries in the worst case. This might be possible to find a deterministic algorithm that requires $O(\log n)$ queries in the worst case. We were not able to find a simple proof that this cannot be done in constant time only depending on the size of the alphabet.

For the reconstruction using $\exists$-factor queries as considered in Section~\ref{Sec:exists-factor_bis}, a simple counting argument yields the lower bound $n\log_2(k)$ on the number of queries. S.~S.~Skiena and G.~Sundaram provide in \cite{Skiena_Sundaram1995JCB} a lower bound in $kn/4 - o(n)$ queries which is better for large alphabets.
In the binary case, we were able to improve the gap between the lower and the upper bound, reducing it to $\left\lceil\frac{\log_2(n)}{2}\right\rceil+5$.
In the general case, even if our result improves the gap between the lower and upper bounds, this gap is still important.
As already mentioned in the introduction, the lower bound $kn/4 - o(n)$ given by S.~S.~Skiena and G.~Sundaram
is also valid if one considers queries in the form {\oq}What is the number of occurrences of $u$ as a factor of $w$?{\cq}. In some sense, considering the numbers of occurrences of factors does not bring a significant amount of extra-information for reconstruction comparatively to information on the existence of factors. 
This contrasts with the subword case where the number of occurrences gives much more information than the existence of occurrences.

To end, let us mention the existence, in the binary case, of a deterministic algorithm that requires, in average, $n+\mathcal{O}(1)$ $\exists$-factor queries over a uniform random word \cite{kzauo_et_al}
which is optimal up to an additive constant. The main idea of this algorithm is similar to the approach used in Section~\ref{Sec:exists-factor_bis}, but
the length $t$ of the longest block of $0$ is determined faster. Indeed, for a binary word of length $n$ taken uniformly at random, the average value of $|t-\log_2(n)|$ is in $\mathcal{O}(1)$.
The existence of a deterministic algorithm using an $n+\mathcal{O}(1)$ number of $\exists$-factor queries in the worst case is open.

\section*{Acknowledgment}
Authors thank Victor Poupet for useful discussions. Many thanks also for the referees for their accurate reading and their valuable suggestions.

\bibliographystyle{plain}
\bibliography{local}

\newpage

\appendix
\section{Proof of Lemma \ref{L_exists_one_letter}}\label{appproof}
\begin{proof}[Proof of Lemma \ref{L_exists_one_letter}]
Assume first that $n$ is unknown.
We start by finding $M$ the smallest power of $2$ larger than $|w|_\alpha$.
This can be done asking whether $\alpha^i$ is a subword of $w$ starting from $i = 1$ and doubling $i$ while the answer is positive. The upper bound is reached by $M=i$ when the answer is negative.

If $M = 1$, then $|w|_\alpha = 0$ and exactly one query was asked (and $1 \le 2\lfloor1 +\log_2(|w|_\alpha + 1)\rfloor$ as desired).
Otherwise, $M = 2^{\lfloor \log_2 |w|_\alpha \rfloor +1}$ is found in $\lfloor \log_2 |w|_\alpha \rfloor +2$ queries. In this case we know, $M/2\le |w|_\alpha< M$, and we can find the value of $|w|_\alpha$ by binary search. The interval $\{M/2,\ldots, M-1\}$ contains $2^{\lfloor \log_2 |w|_\alpha \rfloor}$ values, hence the binary search requires $\lfloor \log_2 |w|_\alpha \rfloor$ $\exists$-subword queries. In the whole process $|w|_\alpha$ can be found using $2 \lfloor 1+\log_2 |w|_\alpha \rfloor\le2\lfloor1 +\log_2(|w|_\alpha + 1)\rfloor$ $\exists$-subword queries as desired.

When $n$ is known, $n$ is an upper bound on $|w|_\alpha$ and
the binary search can be done in the interval $[0, n]$. Hence $|w|_\alpha$ can be determined using  at most $\lceil \log_2(n+1) \rceil$ $\exists$-subword queries.
\end{proof}

\section{Proof of Lemma \ref{L_search_a^t_bis}}\label{annexproof2}
\begin{proof}[Proof of Lemma \ref{L_search_a^t_bis}]
We proceed by induction on the value $y+1-x$. If $x=y$ then we know the value of $t$ and no more queries are needed as expected. If $y>x$, then we ask the query {\oq}is $a^{\lceil(x+y)/2\rceil}$ a factor of $w$?{\cq}.

We deduce $x'\le t\le y'$ where, if the answer is {\oq}yes{\cq}, $x'=\lceil(x+y)/2\rceil$ and $y'=y$ and, if the answer is {\oq}no{\cq}, $x'=x$ and $y'=\lceil(x+y)/2\rceil-1$. In the two cases,
\begin{equation}\label{xtox'_bis}
 y'-x'+1\le 1+\lfloor(y-x)/2\rfloor\,.
\end{equation}
The map $f:z\mapsto \lfloor\frac{z-1}{2}\rfloor +1$ is non-decreasing over the non-negative reals and for all integers $n$, $f(2^n)= 2^{n-1}$, thus for all $z\le2^n$, we have $f(z)\le 2^{n-1}$. Since \eqref{xtox'_bis} can be rewritten, $y'-x'+1\le f(y+1-x)$, we deduce that for all integers $n$, if $y+1-x\le2^n$ then $y'+1-x'\le 2^{n-1}$. In particular, choosing $n=\lceil\log_2(y+1-x)\rceil$ yields,
$y'+1-x'\le 2^{\lceil\log_2(y+1-x)\rceil-1}$, hence
\begin{equation*}
\lceil\log_2(y'+1-x')\rceil \le \lceil\log_2(y+1-x)\rceil-1\,.
\end{equation*}
By induction hypothesis, it implies that we need at most $\lceil\log_2(y+1-x)\rceil-1$ other queries to determine the value of $t$. With the initial query, this is a total of at most $\lceil\log_2(y+1-x)\rceil$ queries as desired.
\end{proof}
\end{document}